\title{Optimal Joins using Compact Data Structures}
\author{Gonzalo Navarro}{
	DCC, U. of Chile \& IMFD, Chile}{}{}{}
  \author{Juan L. Reutter}{
	DCC, PUC \& IMFD, Chile}{}{}{}
  \author{Javiel Rojas-Ledesma}{
	DCC, U. of Chile \& IMFD, Chile}{}{}{}
\authorrunning{Navarro, G. \& Reutter, J. \& Rojas-Ledesma, J.}
\keywords{
  Join algorithms,
  Compact data structures,
  Quadtrees, 
  AGM bound
}
\theoremstyle{plain}
\newtheorem{observation}[theorem]{Observation}
\newcommand{\internal}{\textonehalf\xspace}
\newcommand{\Qdags}{\textrm{Qdags}\xspace}
\newcommand{\qdag}{\textrm{qdag}\xspace}
\newcommand{\qdags}{\textrm{qdags}\xspace}
\newcommand{\Qtree}{\textrm{Quadtree}\xspace}
\newcommand{\Qtrees}{\textrm{Quadtrees}\xspace}
\newcommand{\qtree}{\textrm{quadtree}\xspace}
\newcommand{\qtrees}{\textrm{quadtrees}\xspace}
\newcommand{\lqdag}{\textrm{lqdag}\xspace}
\newcommand{\lqdags}{\textrm{lqdags}\xspace}
\newcommand{\Lqdag}{\textrm{Lqdag}\xspace}
\newcommand{\Lqdags}{\textrm{Lqdags}\xspace}
\newcommand{\unknown}{$\Diamond$\xspace}
\newcommand{\qtreel}{\texttt{{QTREE}}\xspace}
\newcommand{\andl}{\texttt{{AND}}\xspace}
\newcommand{\orl}{\texttt{{OR}}\xspace}
\newcommand{\notl}{\texttt{{NOT}}\xspace}
\newcommand{\extendl}{\texttt{{EXTEND}}\xspace}
\newcommand{\joinl}{\texttt{{JOIN}}\xspace}
\newcommand{\diffl}{\texttt{{DIFF}}\xspace}
\newcommand{\extend}{\text{\sc Extend}\xspace}
\newcommand{\oror}{\text{{\sc Or}}\xspace}
\newcommand{\notnot}{\text{{\sc Not}}\xspace}
\newcommand{\andand}{\text{\sc And}\xspace}
\newcommand{\qtvalue}{\text{\sc Value}\xspace}
\newcommand{\childat}{\text{\sc ChildAt}\xspace}
\newcommand{\A}{\mathcal{A}}
\newcommand{\Dat}{\mathcal{D}}
\newcommand{\prefix}{\text{pre}}
\newcommand{\LongState}[1]{\State
	\parbox[t]{.95\linewidth}{#1\strut}}
\begin{document}
\maketitle
\sloppy

\begin{abstract}
Worst-case optimal join algorithms have gained a lot of attention in the database literature. We now count with several algorithms that are optimal in the worst case, and many of them have been implemented and validated in practice. However, the implementation of these algorithms often requires an enhanced indexing structure: to achieve optimality we either need to build completely new indexes, or we must populate the database with several instantiations of indexes such as B$+$-trees. Either way, this means spending an extra amount of storage space that may be non-negligible.

We show that optimal algorithms can be obtained directly from a representation that regards  the relations as point sets in variable-dimensional grids, without the need of  extra storage. Our representation is a compact \qtree for the static indexes, and a dynamic \qtree sharing subtrees (which we dub a \qdag) for intermediate  results.
We develop a compositional algorithm to process  full join queries under this representation, and show that the running time of this algorithm is worst-case optimal in data complexity. 
Remarkably, we can extend our framework to evaluate more expressive queries from relational algebra by introducing a lazy version of \qdags (\lqdags). Once again, we can show that the running time of our algorithms is worst-case optimal.
\end{abstract}

\maketitle

\section{Introduction} \label{sec-intro}

The state of the art in query processing has recently been shaken by a new generation of join algorithms with strong optimality guarantees based on the AGM bound 
of queries: the maximum size of the output of the query over all possible relations with the same cardinalities~\cite{AGM13}. One of the basic principles of these algorithms is to disregard the traditional notion of a query plan, favoring a strategy that can take further advantage of the structure of the query, while at the same time taking into account the actual size of the database~\cite{tutorialngo,skewstrikesback}.

Several of these algorithms have been implemented and tested in practice with positive results~\cite{HRRS19,NABKNRR15}, especially when handling queries 
with several joins. Because they differ from what is considered standard in relational database systems, the implementation of these algorithms often requires additional data structures, a database that is heavily indexed, or heuristics to compute the best computation path given the indexes that are present. For example, algorithms such as Leapfrog~\cite{leapfrog}, Minesweeper~\cite{minesweeper}, or InsideOut~\cite{insideout} must select a \emph{global order} on the attributes, and assume that relations are indexed in a way that is consistent with these attributes~\cite{NABKNRR15}. If one wants to use these algorithms with more flexibility in the way attributes are processed, then one would probably need to instantiate several combinations of B+ trees or other indexes~\cite{HRRS19}. On the other hand, more involved algorithms such as Tetris~\cite{geometric} or Panda~\cite{panda} require heavier data structures that allow reasoning over potential tuples in the answer.

Our goal is to develop optimal join algorithms that minimize the storage for additional indexes while at the same time being independent of a particular ordering of attributes. We address this issue by resorting to compact data structures: indexes using a nearly-optimal amount of space while supporting all operations we need to answer join queries. 

We show that worst-case optimal algorithms can be obtained when one assumes that the input data is represented as \qtrees, and stored under a compact representation for cardinal trees~\cite{BLNis13}.
\Qtrees are geometric structures used to represent data points in grids of size $\ell \times \ell$ (which can be generalized to any dimension). Thus, a relation $R$ with attributes $A_1,\dots,A_d$ can be naturally viewed as a set of points over grids of dimension $d$, one point per tuple of $R$: the value of each attribute $A_i$ is the $i$-th coordinate of the corresponding point.
%

To support queries under this representation, our main tool is a new dynamic version of \qtrees, which we denote \qdags, where some nodes may share complete subtrees.
Using \qdags, we can reduce the computation of a full join query $J = R_1 \bowtie \cdots \bowtie R_n$ with $d$ attributes, to an algorithm that first extends the \qtrees for $R_1,\dots,R_n$ into \qdags of dimension $d$, and then intersects them to obtain a \qtree. Our first result shows that such algorithm is indeed worst-case optimal: 
\begin{theorem}
Let $R_1(\A_1),\ldots,R_n(\A_n)$ be $n$ relations with attributes in $[0,\ell-1]$, and let $d = |\bigcup_i \A_i|$. We can represent the relations using only $\sum_i (|\A_i|+2+o(1)) |R_i| \log\ell + O(n\log d)$ bits, so that the result of a join query $J = R_1 \bowtie \cdots \bowtie R_n$ over a database instance $D$, can be computed in $\Tilde{O}(\text{AGM})$ time%
\footnote{
    $\Tilde{O}$ hides poly-log $N$ factors, for $N$ the total input size, as well as factors that just depend on $d$ and $n$ (i.e., the query size), which are assumed to be constant. We provide a precise bound in Section~\ref{sec:joinalg}.
}.
\end{theorem}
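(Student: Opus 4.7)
The plan is to combine three ingredients from our framework: a compact cardinal-tree encoding of each input \qtree, an \extend operation that lifts the \qtree of $R_i$ to a \qdag of dimension $d$ by virtually sharing subtrees along the attribute axes missing from $\A_i$, and a synchronised intersection that traverses the $n$ extended \qdags in parallel to produce the resulting join \qtree.

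For the space bound, I would first observe that a \qtree representing $|R_i|$ points in an $|\A_i|$-dimensional grid of side $\ell$ has depth $\lceil \log \ell \rceil$ and at most $|R_i|\log\ell$ nodes in total, since each point contributes at most one new node per level. Viewed as a cardinal tree with alphabet size $2^{|\A_i|}$, the representation of~\cite{BLNis13} encodes it in $(|\A_i|+2+o(1))|R_i|\log\ell$ bits while supporting child navigation in constant time. Summing over the $n$ relations yields the main term; the additive $O(n\log d)$ bits store, per relation, the injective mapping of the attributes in $\A_i$ into the global set of $d$ attributes, which is needed to apply \extend coherently across relations.

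For the algorithmic part, I would describe \extend as a lightweight re-interpretation of the \qdag's branching rule: whenever an attribute is missing from $\A_i$, the corresponding bit of the $d$-dimensional child index is free, so the \qdag re-uses the same child pointer for both values of that bit, sharing subtrees rather than copying them. The join is then computed top-down: at each visited tuple of \qdag nodes, the output $d$-dimensional child-existence vector is the bitwise AND of the $n$ input vectors, and the algorithm recurses into every surviving quadrant. The work per visited node is $O(n\cdot 2^d)$.

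The main obstacle is bounding the number of visited configurations by $\tilde O(\AGM)$. The argument I plan to use is an induction over the tree depth: at recursion level $k$, the set of visited tuples of \qdag nodes corresponds to a join subproblem on a coarser grid of side $\ell/2^k$, whose relations' cardinalities are bounded by those of the originals (each point of $R_i$ lies in exactly one coarser cell). By monotonicity of the AGM bound in the input cardinalities, the number of visited configurations at level $k$ is at most $\AGM$, and summing over the $\log\ell$ levels and multiplying by the $O(n\cdot 2^d)$ per-node cost yields the claimed $\tilde O(\AGM)$ total time, with the $\tilde O$ absorbing the polylogarithmic factor in $\ell$ together with the constants depending on $n$ and $d$.
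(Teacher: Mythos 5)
Your proposal follows essentially the same route as the paper: store each relation as a compact cardinal-tree encoding of its \qtree, lift each to dimension $d$ via the subtree-sharing \extend, intersect by a synchronized top-down traversal, and bound the number of live node-configurations at each level $k$ by the output size of the join of the level-$k$ coarsenings of the relations, which by monotonicity of the AGM bound in the cardinalities is at most $\AGM$. The paper phrases this last step as an explicit construction of a database $D'$ obtained by truncating every coordinate to its first $j$ bits (padded with zeros), but the content is identical to your coarsening argument. Two small corrections are needed. First, the space bound $(|\A_i|+2+o(1))|R_i|\log\ell$ is \emph{not} achieved by the $k^d$-tree of Brisaboa et al., whose bitvector costs $2^{|\A_i|}|R_i|\log\ell$ bits; you must instead invoke the cardinal-tree representation of Benoit et al.\ (Observation~\ref{obs:compactqtree}) to avoid the exponential factor in the dimension. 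Second, the $\tilde O$ in the statement hides polylogarithmic factors in $N$, not in $\ell$; your analysis yields a $\log\ell$ factor, so when $\log N = o(\log\ell)$ you must additionally remap the at most $dN$ distinct attribute values to identifiers of $O(\log N)$ bits before building the \qtrees, as the paper does in the final step of its proof, to obtain the claimed $\log\min(\ell,N)$ height.
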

Note that just storing the tuples in any $R_i$ requires $|\A_i||R_i|\log\ell$ bits, thus our representation adds only a small extra space of $(2+o(1))|R_i|\log\ell + O(n\log d)$ bits (basically, two words per tuple, plus a negligible amount that only depends on the schema).  Instead, any classical index on the raw data (such as hash tables or B$+$-trees) would pose a linear extra space, $O(|\A_i||R_i|\log\ell)$ bits, often multiplied by a non-negligible constant (especially if one needs to store multiple indexes on the data).

Our join algorithm works in a rather different way than the most popular worst-case algorithms. To illustrate this, consider the triangle query $J = R(A,B) \bowtie S(B,C) \bowtie T(A,C)$.
The most common way of processing this query optimally is to follow what Ngo et al.~\cite{skewstrikesback} define as the \emph{generic} algorithm: select one of the attributes of the query (say $A$), and iterate over all elements $a \in A$ that could be an answer to this query, that is, all $a \in \pi_a(R) \cap \pi_a(T)$.
Then, for each of these elements, iterate over all $b \in B$ such that the tuple 
$(a,b)$ can be an answer: all $(a,b)$ in $(R \bowtie \pi_B(S)) \bowtie \pi_A(T)$, and so on. 

Instead, \qtrees divide the output space, which corresponds to a grid of size $\ell^3$, into 8 subgrids of size $(\ell/2)^3$, and for each of these grids it recursively evaluates the query.  As it turns out, this strategy is as good as the generic strategy defined by Ngo et al.~\cite{skewstrikesback} to compute joins,
and can even be extended to other relational operations, as we explain next. 

Our join algorithm boils down to two simple operations on \qtrees: an \extend operation that lifts the \qtree representation of a grid to a higher-dimensional grid, and an \andand operation that intersects trees. But there are other operations that we can define and implement. For example, the synchronized \oror of two \qtrees gives a compact representation of their union, and complementing the \qtree values can be done by a \notnot operation.
We integrate all these operations in a single framework, and use it to answer more complex queries given by the combination of these expressions, as in relational algebra. 

To support these operations we introduce lazy \qdags, or \lqdags for short, in which nodes may be additionally labeled with query expressions. The idea is to be able to delay the computation of an expression until we know such computation is needed to derive the output. To analyze our framework we extend the idea of a worst-case optimal algorithm to 
arbitrary queries: If a {\em worst-case optimal algorithm} to compute the output of a formula $F$ takes time $T$ over relations $R_1, \ldots, R_n$ of sizes $s_1, \ldots,
s_n$, respectively, of a database $D$, then there exists a database $D'$ with relations $R_1', \ldots, R_n'$ of sizes $O(s_1), \ldots, O(s_n)$,
respectively, where the output of $F$ over $R_1', \ldots, R_n'$ is of size $\Omega(T)$. We prove that \lqdags allow us to maintain optimality in the presence of union and negation operators: 

\begin{theorem}
Let $Q$ be a relational algebra query built with joins, union and complement, and where no relation appears more than once in $Q$. Then there is an algorithm to evaluate $Q$ that is worst-case optimal in data complexity. 
\end{theorem}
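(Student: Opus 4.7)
The plan is to proceed by structural induction on the query $Q$. For each subexpression I build an \lqdag whose leaves are the \qtrees representing base relations, and whose internal nodes are labeled by \andand (for joins, used together with \extend to align schemas, as in Theorem~1), \oror (for union, again after extending to a common schema), and \notnot (for complement). Evaluation proceeds by a top-down traversal of the grid: at each cell we decide whether the \lqdag is empty, full, or mixed by recursively querying each child at that cell and combining the results through the truth table of the operator at the node. The traversal descends only into mixed cells, and it materializes an output \qtree for $Q$ as it goes.

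To bound the running time I would argue that the number of grid cells visited is proportional to the size of the output \qtree of $Q$, since uniformly empty or full cells contribute only constant work each. Every visit costs a constant number of Boolean evaluations per \lqdag node, and the expression tree is of constant size (query complexity). This yields a running time proportional to the size of the output \qtree, up to polylogarithmic overhead of the underlying compact representation. Crucially, the output \qtree of a \notnot node has exactly the same shape as that of its child (with leaf labels inverted), and the output \qtree of an \oror node has size at most the sum of its children's sizes; thus the size of the final output \qtree remains controlled in terms of the input sizes.

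For the worst-case lower bound I would construct, for each query $Q$, an adversarial database $D'$ of sizes $O(s_1),\ldots,O(s_n)$ whose output \qtree has size at least a constant fraction of the algorithm's running time on the original $D$. The construction proceeds inductively: for joins this is Theorem~1 together with the AGM bound; for $Q_1 \cup Q_2$ one takes disjoint adversarial instances for the two subexpressions, so that the output \qtree has size equal to the sum; for $\neg Q_1$ the output \qtree has the same size as that of $Q_1$, so the adversary inherits directly from the inductive case. The hypothesis that no relation appears more than once in $Q$ is what lets us freely pick independent witnesses for each subexpression without creating spurious cancellations.

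The main obstacle will be the interaction of complement with joins and unions inside a nested expression, for instance when evaluating $R \bowtie \neg S$ or $\neg(R \cup S)$. A naive analysis of the output of $\neg S$ suggests a blow-up of up to $\ell^d$ points, but the \lqdag never materializes this blow-up; the challenge is therefore to argue that the cells visited while lazily exploring $\neg S$ during, say, a join with $R$, are in bijection (up to constants) with cells that contribute to the output of the full query. This requires showing that the recursive evaluation never wastes work on cells that contribute neither to the output nor to a necessary refinement thereof, and that the adversarial lower bound construction survives the composition of operators, which is the technical heart of the proof.
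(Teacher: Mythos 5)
Your algorithm is essentially the paper's: build an \lqdag over the formula and compute its completion by a lazy top-down traversal using per-operator \qtvalue and \childat rules. The gap is in the analysis. Your central claim --- that the number of cells visited is proportional to the size of the output \qtree because ``uniformly empty or full cells contribute only constant work each'' --- is false, and if it were true the theorem would be nearly trivial (you would get instance optimality, not merely worst-case optimality). The traversal must expand every node whose value is still \unknown, and such nodes can die at a deeper level without contributing anything to the output: already for the plain triangle join the algorithm can do $\Theta(N^{3/2})$ work on an instance whose answer is empty. The quantity that actually bounds the running time is the size of the \emph{super-completion} $Q_F^+$ (the tree obtained by treating \unknown as \internal), which can be much larger than $Q_F$; the entire content of the theorem is to bound $|Q_F^+|$ by the worst-case output size over databases of comparable cardinalities. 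You acknowledge this difficulty in your last paragraph, but the property you propose to establish (that the evaluation ``never wastes work on cells that contribute neither to the output nor to a necessary refinement thereof'') is precisely what fails.

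Your lower-bound construction is also not the paper's, and it does not obviously work: the work spent on a subformula under lazy evaluation depends on how the sibling subformulas prune it, so composing disjoint adversarial witnesses for $Q_1 \cup Q_2$, or inheriting one through a complement, does not account for the nodes of $Q_F^+$ that arise from the \emph{interaction} of the operators. The paper argues globally instead: let $M$ be the maximum number of nodes on a level, say level $j$, of $Q_F^+$; trim every base \qtree at level $j-1$, turning surviving internal nodes into leaves with value $1$; by monotonicity of \andand and \oror, every node of $Q_F^+$ with value \internal or \unknown at level $j-1$ evaluates to $1$ over the trimmed relations, so a database with at most the original cardinalities has output $\Omega(M/2^d)$. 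The one place this breaks is exactly \notnot: trimming a \qtree that sits under a \notl turns the new $1$-leaves into $0$s, so monotonicity cannot be invoked. The paper repairs this by substituting, for each \notl operand, a \emph{different} relation of size $O(2^d|Q_i|)$ whose last level is bitwise negated --- which is the real reason every atomic expression (each occurrence of a relation, negated or not) must refer to a distinct relation, not merely the avoidance of ``spurious cancellations'' in a compositional witness. Without this trimming lemma, or an equivalent substitute, your proof does not go through.
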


Consider, for example, the query $J' = R(A,B) \bowtie S(B,C) \bowtie \overline{T}(A,C)$, which joins $R$ and $S$ with the complement $\overline{T}$ of $T$. One could think of two ways to compute this query. The first is just to join $R$ and $S$ and then see which of the resulting tuples are not in $T$. But if $T$ is dense ($\overline{T}$ is small), it may be more efficient to first compute $\overline{T}$ and then proceed as on the usual triangle query. Our algorithm is optimal because it can choose locally between both strategies: by dividing into quadrants one finds dense regions of $T$ in which computing $\overline{T}$ is cheaper, while in sparse regions the algorithm first computes the join of $R$ and $S$. 

Our framework is the first in combining worst-case time optimality with the use of compact data structures. The latter can
lead to improved performance in practice, because relations can be stored in faster memory, higher in the memory hierarchy~\cite{Nav16}. This is especially relevant when the compact representation fits in main memory while a heavily indexed representation requires resorting to the disk, which is orders of magnitude slower. Under the recent trend of maintaining the database in the aggregate main memory of a distributed system, a compact representation leads to using fewer computers, thus reducing hardware, communication, and energy costs, while improving performance.




\section{Quadtrees} \label{sec-qtrees}

\begin{figure*}
	\centering
	\includegraphics[scale=0.7]{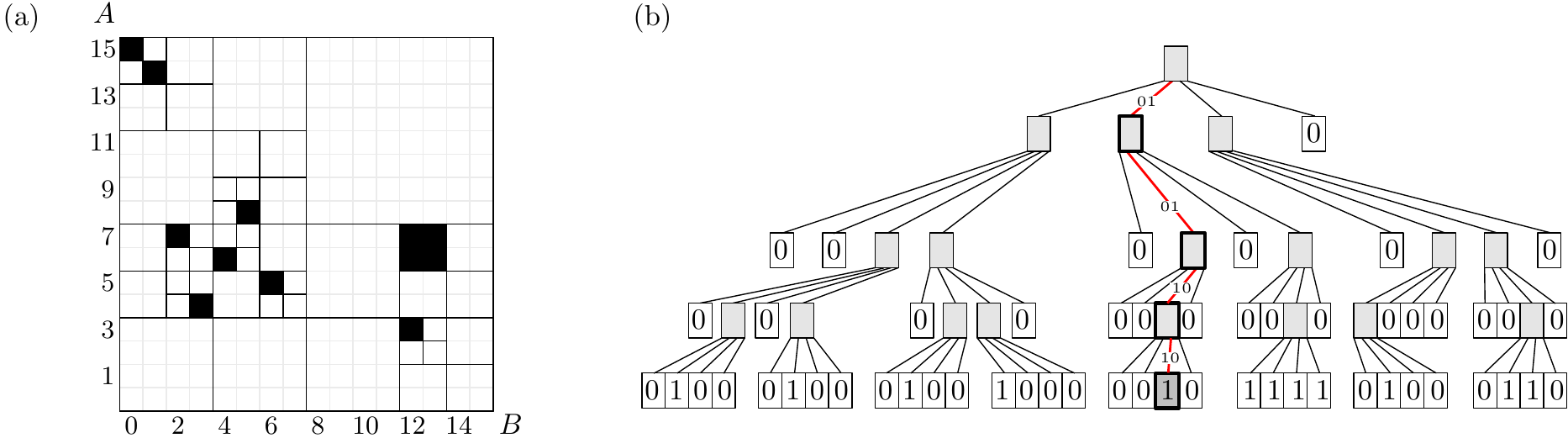}
	\caption{A \qtree representing $R(A, B) = \{(4,3), (7,2), (5,6), (6,4),
		(3,12), (6,12), (6,13)$, $(7,12), (7,13), (8,5), (14,1), (15,0)\}.$
		(a) Representation of $R(A,B)$ in a $2^4 \times 2^4$ grid, and
		representation of the hierarchical partition defining the \qtree. The
		black cells correspond to points in $R$.
		(b) The \qtree representing $R$. The shadowed leaf of the tree
		corresponds to the point $p = (3,12)$. Concatenating the labels in the
		path down to $p$ yield the bit-string
		`$0\textbf{1}0\textbf{1}1\textbf{0}1\textbf{0}$' which encodes the first
		(resp. second) coordinate of $p$ in the bits at odd (resp. even)
		positions ($3=0011, 12=1100$).
		%
		}
\label{fig:example}
\end{figure*}

A Region Quadtree~\cite{FB74,Sam06} is a structure used to store points in 
two-dimensional grids of $\ell \times \ell$. We focus on the variant called 
MX-Quadtree~\cite{WF90,Sam06}, which can be described as follows. Assume for
simplicity that $\ell$ is a power of 2. If $\ell=1$, then the grid has only one
cell and the \qtree is an integer $1$ (if the cell has a point) or $0$ (if
not). For $\ell>1$, if the grid has no points, then the 
\qtree is a leaf. Otherwise, the \qtree is an internal node with 
four children, each of which is the \qtree of one of the four $\ell/2 \times
\ell/2$ quadrants of the grid. (The deepest internal nodes, whose children
are $1 \times 1$ grids, store instead four integers in $\{ 0,1\}$ to encode 
their cells.)

Assume each data point is described using the binary representation of each of its coordinates (i.e., as a pair of $\log \ell$-bit vectors). 
We order the grid quadrants so that 
the first contains all points with coordinates of the form $(0 \cdot c_x, 0 \cdot c_y)$, for $\log \ell-1$ bit vectors $c_x$ and $c_y$, the second 
contains points $(0 \cdot c_x, 1 \cdot c_y)$, the third $(1 \cdot c_x, 0 \cdot
c_y)$, and the last quadrant stores the points $(1 \cdot c_x, 1 \cdot c_y)$. Fig.~\ref{fig:example} shows a grid and its deployment as a \qtree.

\Qtrees can be generalized to higher dimensions. A \qtree of dimension $d$ is a tree used to represent data points in a 
$d$-dimensional grid $G$ of size $\ell^d$. Here, an empty grid is represented by a leaf and a nonempty grid corresponds to an internal node with $2^d$ children representing the $2^d$ subspaces spanning from combining the first bits of each dimension. Generalizing the case $d=1$,
the children are ordered using the Morton~\cite{Mor66} partitioning of the grid: a sequence of $2^d$ subgrids of size $(\ell/2)^d$ in which the $i$-th 
subgrid of the partition, represented by the binary encoding $b_i$ of $i$, is
defined by all the points $(b_{c_1},\dots,b_{c_d})$ in which the word formed by 
concatenating the first bit of each string $b_{c_j}$ is precisely the string $b_i$. 

A \qtree with $p$ points has at most $p\log\ell$ nodes (i.e., root-to-leaf paths).
A refined analysis in two dimensions~\cite[Thm.~1]{GHKNPSdcc15.2} shows that \qtrees have
fewer nodes
when the points are clustered: if the points distribute along $c$ clusters,
$p_i$ of them inside a subgrid of size $\ell_i \times \ell_i$, then there are
in total $O(c\log \ell + \sum_i p_i \log \ell_i)$ nodes in the \qtree. The
result easily generalizes to $d$ dimensions: the cells are of size $\ell_i^d$
and the \qtree has $O(c\log \ell + \sum_i p_i \log \ell_i)$ internal nodes,
each of which stores $2^d$ pointers to children (or integers, in the last 
level).

Brisaboa et al.~\cite{BLNis13} introduced a compact \qtree representation called the $k^d$-tree.
They represent each internal \qtree node as the $2^d$ bits telling which of its 
quadrants is empty (0) or nonempty (1). Leaves and single-cell nodes are not 
represented because their data is deduced from the corresponding bit of their 
parent.
The $k^d$-tree is simply a bitvector $V$ obtained by concatenating the
$2^d$ bits of every (internal) node in levelwise order. Each node is identified with
its order in this deployment, the root being 1. Navigation on the \qtree (from a parent to its children, and vice-versa) is simulated in constant time using $o(|V|)$ additional bits on top of $V$.
On a \qtree in dimension $d$ storing $p$ points, the length of the bitvector $V$ is $|V| \le 2^d p \log \ell$, increasing exponentially with $d$. This bitvector is sparse, however, because it has at most $p\log \ell$ 1s, one per \qtree node. We then resort to a representation of high-arity cardinal trees introduced by Benoit et al.~\cite[Thm.~4.3]{BDMRRR05}, which requires only $(d+2)p\log\ell + o(p\log\ell) + O(\log d)$ bits, and performs the needed tree traversal operations in constant time.

\begin{observation}{(\textbf{cf.\ Benoit et al.~\cite{BDMRRR05}, Thm.~4.3})\hspace{4pt}}
\label{obs:compactqtree}
Let $Q$ be a \qtree storing $p$ points in $d$ dimensions with integer coordinates in the interval $[0, \log \ell - 1]$. Then,
there is a representation of $Q$ which uses $(d+2+o(1))p\log\ell + O(\log d)$ bits, can be constructed in linear expected time, and supports constant time
parent-children navigation on the tree.
\end{observation}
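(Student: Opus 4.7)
The plan is to derive the bound as a specialization of Benoit et al.'s cardinal-tree representation. The key observation is that a $d$-dimensional \qtree is precisely a cardinal tree of arity $k = 2^d$: every internal node has up to $2^d$ distinguishable children, indexed by the Morton code of the corresponding subquadrant. Before invoking their theorem I would pin down the node count: since each of the $p$ stored points is reached by a unique root-to-leaf path of length at most $\log\ell$, the tree has at most $n := p \log \ell$ internal nodes.

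Next I would apply Benoit et al.'s Theorem~4.3, which encodes a cardinal tree of arity $k$ with $n$ nodes in $(2 + \log k)\,n + o(n \log k)$ bits while supporting constant-time parent, labeled-child, and generic child operations once $o(n \log k)$ auxiliary index bits are in place. Substituting $k = 2^d$ and $n = p\log\ell$, the leading term becomes $(d+2)\,p\log\ell$ bits; folding the lower-order $o(n\log k)$ redundancy into the coefficient gives the claimed $(d+2+o(1))\,p\log\ell$. The additive $O(\log d)$ term accounts for writing down the dimension $d$ itself and the constant-sized bookkeeping describing the per-node bit layout; the child alphabet $\{0,1\}^d$ is implicit and never needs to be stored explicitly.

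For construction in linear expected time I would radix- or hash-sort the $p$ input points by their Morton-interleaved coordinates, handling the $d\log\ell$ bits in $O(1)$ machine-word passes, and then sweep the sorted stream top-down to emit the levelwise node bits of the \qtree. Building the rank/select and labeled-child auxiliary structures on top of this raw bitstring is a standard linear-time operation, so the whole construction runs in time proportional to the output size, $O(p \log \ell)$.

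The only step requiring care is matching the paper's lower-order terms: a direct black-box application of Benoit et al.\ produces an $o(d p \log \ell)$ redundancy, whereas the observation states the tighter $(d+2+o(1))\,p\log\ell$. This is consistent under the standing assumption, footnoted in the paper, that $d$ is treated as a small constant inside the asymptotic estimates; otherwise the bound should be read with the $o(1)$ understood relative to the leading coefficient $d+2$. Beyond aligning these conventions, the proof is a verification exercise on top of Benoit et al.'s construction.
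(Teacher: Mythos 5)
Your proposal matches the paper's own justification essentially verbatim: the paper views the $d$-dimensional \qtree as a cardinal tree of arity $k=2^d$ with at most $p\log\ell$ nodes and invokes Benoit et al.'s Theorem~4.3 as a black box, with the ``expected'' in the construction time coming from the perfect hashing inside that representation. The only quibble is in your last paragraph: Benoit et al.'s redundancy is $o(n)+O(\log\log k)$, which here is $o(p\log\ell)+O(\log d)$, so the stated $(d+2+o(1))p\log\ell+O(\log d)$ bound follows directly without having to treat $d$ as a constant to absorb an $o(n\log k)$ term.
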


From now on, by \qtree we refer to this compact representation. Next, we show how to represent relations using \qtrees and evaluate join queries over this representation.


\section{Multi-way Joins using Qdags}\label{sec-join}

We assume for simplicity that the domain $\Dat(A)$ of an attribute $A$ consists of all binary strings of length $\log \ell$, representing the integers in $[0,\ell -1]$, and that $\ell$ is a power of $2$. 

A relation $R(\A)$ with attributes $\A=\{A_1,\dots,A_d\}$ can be naturally represented as a \qtree: simply interpret each tuple in $R(\A)$ as a data point over a $d$-dimensional grid with $\ell^d$ cells, and store those points in a $d$-dimensional \qtree. 
Thus, using \qtrees one can represent the relations in a database using compact
space. The convenience of this representation to handle restricted join queries
with naive algorithms has been demonstrated practically on RDF stores
\cite{AGBFMPNkais14}. In order to obtain a general
algorithm with provable performance,
we introduce \qdags, an enhanced version of \qtrees, together with a new algorithm to efficiently evaluate join queries over the compressed representations of the relations.

We start with an example to introduce the basics behind our algorithms and argue for the need of \qdags.
We then formally define \qdags and explore their relation with \qtrees.
Finally, we provide a complete description of the join algorithm and analyze its running time.

\subsection{The triangle query: quadtrees vs qdags}

Let $R(A,B)$, $S(B,C)$, $T(A,C)$ be relations over the attributes $\{A, B, C\}$ denote the domains of $A, B$ and $C$ respectively, and consider the triangle query $R(A,B) \bowtie S(B,C) \bowtie T(A,C)$.
The basic idea of the algorithm is as follows:
we first compute a \qtree $Q_R^*$ that represents the cross product 
$R(A,B) \times \text{All}(C)$, where $\text{All}(C)$ is a relation with an 
attribute $C$ storing all elements in the domain $[0,\ell-1]$. 
Likewise, we compute $Q_S^*$ representing $S(B,C) \times \text{All}(A)$,
and  $Q_T^*$ representing $T(A,C) \times \text{All}(B)$.
Note that these \qtrees represent points in the three-dimensional
 grid with a cell for every possible value in $\Dat(A) \times \Dat(B) \times \Dat(C)$, where we assume that the domains $\Dat(\cdot)$ of the attributes are
all $[0,\ell-1]$.
Finally, we traverse the three \qtrees in synchronization building 
a new \qtree that represents the intersection of $Q_R^*$, $Q_S^*$ and $Q_T^*$. 
This \qtree represents the desired output because 
%
$$	R(A,B) \bowtie S(B,C) \bowtie T(A,C) = 
		(R(A,B) \times \text{All}(C)) \cap (S(B,C) \times \text{All}(A)) \cap (T(A,C) \times \text{All}(B)).$$

Though this algorithm is correct, it can perform poorly in terms of space and running time. The size of $Q_R^*$, for instance, can be considerably bigger than that of $R$, and even than the size of the output of the query. 
If, for example, the three relations have $n$ elements each, the size of the output is bounded by $n^{3/2}$ \cite{AGM13}, while building $Q_R^*$ costs $\Omega(n \ell)$ time and space.
This inefficiency stems from the fact that \qtrees are not smart to represent relations of the form $R^*(\A)=R(\A')\times \text{All}(\A \setminus \A')$, where $\A' \subset \A$, with respect to the size of a \qtree representing $R(\A')$.
Due to its tree nature, a \qtree does not benefit from the regularities that appear in the grid representing $R^*(\A)$. 
To remedy this shortcoming, we introduce \qdags, 
\qtree-based data structures
that represent sets of the form $R(\A')\times \text{All}(\A \setminus \A')$ by adding only {\em constant} additional space to the \qtree representing $R(\A')$, for any $\A' \subseteq \A$.

A \qdag is an {\em implicit} representation of a $d$-dimensional
\qtree $Q^d$ (that has certain regularities) using only a reference to a $d'$-dimensional
\qtree $Q^{d'}$, with $d' \le d$, and an auxiliary {\em mapping function} that defines how to use $Q^{d'}$ to simulate navigation over $Q^{d}$.  
\Qdags can then represent relations of the form  $R(\A')\times \text{All}(\A \setminus \A')$
using only a reference to a \qtree 
representing $R(\A')$, and a constant-space mapping function.

To illustrate how a \qdag works, consider a relation $S(B,C)$, and let $Q_S^*$ be 
a \qtree representing $S^*(A,B,C) = \text{All}(A) \times S(B,C)$.
Since $Q_S^*$ stores points in the $\ell^3$ cube, each node in $Q_S^*$ 
has $8$ children.
As $\text{All}(A)$ contains all $\ell$ elements, for each 
original point $(b,c)$ in $S$, $S^*$ contains $\ell$ points corresponding to elements
$(0,b,c),\dots,(\ell-1,b,c)$.
We can think of this as \emph{extending} each point in $S$ to a box of dimension $\ell
\times 1 \times 1$.
With respect to $Q_S^*$, this implies that, among the $8$ children of a node, 
the last $4$ children will always be identical to the first $4$, and their values
will in turn be identical to those of the corresponding nodes in the \qtree $Q_S$
representing $S$.
In other words, each of the 
four subgrids $1 a_1 a_2$ is identical to the subgrid $0 a_1 a_2$, and these 
in turn are identical to the subgrid $a_1 a_2$ in $S$
(see Fig.~\ref{fig:example2} for an example).
Thus, we can implicitly represent $Q_S^*$ by the pair $(Q_S, M=[0,1,2,3,0,1,2,3])$:
the root of $Q_S^*$ is the root of $Q_S$,
and the $i$-child of the root of $Q_S^*$ is represented by the pair $(C, M)$, 
where $C$ is the $M[i]$-th child of the root of $Q_S$.
%

\begin{figure*}[t]
	\centering
	\includegraphics[width=\textwidth,page=3]{images/quadtree}
	\vspace*{-8mm}
	\caption{An illustration of a \qdag for $S^*(\{A,B,C\})=\text{\normalfont
		All}(A) \times S(B,C)$, with $S(B, C) = \{(3,4), (6,4), (6,5), (7,4),
		(7,5)\}$. 
		a) A geometric representation of $S(B, C)$ (left), and
		$S^*(\{A,B,C\})$ (right). 
		b) A \qtree $Q_S$ for $S(B, C)$ (left), and the directed acyclic graph
		induced by the \qdag $(Q_S, M=[0,1,2,3,0,1,2,3])$, which represents
		$S^*(\{A,B,C\})$. The red cell in (a) corresponds to the point
		$p=(4,3,4)$. The leaf representing $p$ in the \qdag can be reached
		following the path highlighted in (b). Note the relation between the
		binary representation ({\normalfont \textbf{1}{\color{gray}
		\textbf{0}}0,\textbf{0}{\color{gray}
		\textbf{1}}0,\textbf{1}{\color{gray} \textbf{0}}0}) of $p$, and the
		Morton codes \textbf{101}, {\color{gray} \textbf{010}}, {\normalfont
		010} of the nodes in the path from the root to the leaf for $p$.}
	\label{fig:example2}
\end{figure*}

\subsection{Qdags for relational data}

We now introduce a formal definition of the \qdags, and describe the algorithms
which allow the evaluation of multijoin queries in worst-case optimal time.

\begin{definition}[\qdag] \label{def:qdag}
	Let $Q^{d'}$ be a \qtree representing a relation with $d'$ attributes.
	A \qdag $Q^d$, for $d \ge d'$, is a pair $(Q^{d'},M)$, with 
$M:[0,2^{d}-1]\rightarrow[0,2^{d'}-1]$. This $\qdag$ represents a $\qtree$ $Q$,
which is called the {\em completion} of $Q^d$, as follows:
	\begin{enumerate}
		\item If $Q^{d'}$ represents a single cell, then $Q$ 
represents a single cell with the same value. 
		\item  If $Q^{d'}$ represents a $d'$-dimensional grid empty of
points, then $Q$ represents a $d$-dimensional grid empty of points.
		\item Otherwise, the roots of both $Q^{d'}$ and $Q$ are 
internal nodes, and for all $0 \le i < 2^d$, the $i$-th child of $Q$ is the
\qtree represented by the \qdag $(C(Q^{d'},M[i]),M)$, where $C(Q^{d'},j)$ 
denotes the $j$-th child of the root node of \qtree $Q^{d'}$.
	\end{enumerate}	
\end{definition}

We say that a \qdag represents the same relation $R$ represented by its 
completion. 
Note that, for any $d$-dimensional \qtree $Q$, one can generate a \qdag whose completion is $Q$ simply as the pair $(Q, M)$, where $M$ is the \emph{identity mapping} $M[i]=i$, for all $0 \le i < 2^d$.
We can then describe all our operations over \qdags. Note, in particular, that we can use mappings to represent any reordering of the attributes.

In terms of representation, the references to \qtree nodes consist of the
identifier of the \qtree and the index of the node in level-wise order. This
suffices to access the node in constant time from its 
compact representation.
For a \qdag $Q' = (Q,M)$, we denote by $|Q'|$ the number of internal nodes in 
the base \qtree $Q$, and by $||Q'||$ the number of internal nodes in the
completion of $Q'$.

\begin{figure*}[t]
	\begin{minipage}[t]{0.49\textwidth}
		\begin{algorithm}[H]
			\begin{algorithmic}[1]
				\footnotesize
				\Require \qdag $(Q,M)$ with grid side $\ell$.
				\Ensure The integer $1$ if the grid is a single point, $0$ if the grid is empty, and \internal otherwise.
				\vskip 4pt
				\If {$\ell = 1$} \Return the integer $Q$ \EndIf
				\If {$Q$ is a leaf} \Return $0$ \EndIf
			\State \Return \internal
			\end{algorithmic}
		\caption{\qtvalue}\label{alg:value}
		\end{algorithm}
	\end{minipage}
	\hskip .03\textwidth
	\begin{minipage}[t]{0.49\textwidth}
	\small
	\begin{algorithm}[H]
			\begin{algorithmic}[1]
				\footnotesize
	\Require \qdag $(Q,M)$ on a grid of dimension $d$ and side
	$\ell$, and a child number $0 \le i < 2^d$. Assumes $Q$ is not a leaf or an integer.
					\Ensure A \qdag $(Q',M)$ corresponding to the $i$-th child of
	$(Q,M)$.
	
					\vskip 4pt
					\State{\Return $(C(Q,M[i]),M)$}
			\end{algorithmic}
			\caption{\childat}
			\label{alg:child}
	\end{algorithm}
	\end{minipage}	
\end{figure*}

Algorithms~\ref{alg:value} and \ref{alg:child}, based on Definition~\ref{def:qdag},
will be useful for the navigation of \qdags. Operation \qtvalue yields a 0 iff
the subgrid represented by the \qdag is empty (thus the \qdag is a leaf); a $1$
if the \qdag is a full single cell, and \internal if it is an internal node. Operation \childat lets us
descend by a given child from internal nodes representing nonempty grids. The
operations ``integer $Q$'', ``$Q$ is a leaf'', and ``$C(Q,j)$'' are
implemented in constant time on the compact representation of $Q$.

\smallskip
\noindent
\textbf{Operation \extend}. We introduce an operation to obtain, from the \qdag
representing a relation $R$, a new \qdag representing the relation $R$ extended
with new attributes.

\begin{definition} \label{def:extend}
	Let $\A' \subseteq \A$ be sets of attributes, let $R(\A')$ be a relation
over $\A'$, and let $Q_R=(Q,M)$ be a \qdag that represents $R(\A')$. The
operation $\extend(Q_R,\A)$ returns a \qdag $Q_R^*=(Q,M')$ that represents the
relation $R \times \text{All}(\A \setminus \A')$.
\end{definition}

To provide intuition on its implementation, let $\A'$ be the set of attributes
$\{A,B,D\}$ and let $\A = \{A,B,C,D\}$, and consider $R(\A')$, $Q_R$ and $Q_R^*$
from Definition~\ref{def:extend}. Each node of $Q_R$ has 8 children, while each
node of $Q_R^*$ has 16 children. Consider the child at position $i=12$ of
$Q_R^*$. This node represents the grid with Morton code
$m_4$=`\textbf{11{\color{gray}0}0}' (i.e., 12 in binary), and contains the
tuples whose coordinates in binary start with \textbf{1} in attributes $A,B$ and
with \textbf{0} in attributes $C, D$. This  child has elements if and only if
the child with Morton code $m_3$=`\textbf{110}' of $Q_R$ (i.e., its child at
position $j=6$) has elements; this child is in turn the $M[6]$-th child of $Q$.
Note that $m_3$ results from projecting $m_4$ to the positions 0,1,3 in which the attributes $A, B, D$ appear in $\{A,B,C,D\}$. 
Since the Morton code \textbf{11{\color{gray}1}0}' (i.e., 14 in binary) also projects to $m_3$, it holds that $M'[12]=M'[14]=M[6]$.
We provide an implementation of the \extend{} operation for the general case in Algorithm~\ref{alg:extend-body}. 
The following lemma states
the time and space complexity of our implementation of \extend{}. For simplicity, we count the space in terms of computer
words used to store references to the \qtrees and values of the mapping function
$M$.

\begin{algorithm}[t]
	\footnotesize
	\begin{algorithmic}[1]
		\footnotesize
		\Require A \qdag $(Q,M')$ representing a relation $R(\A')$,
and a set $\A$ such that $\A' \subseteq \A$.
		\Ensure A \qdag $(Q,M)$ whose completion represents the relation $R(\A')\times \text{All}(\A \setminus \A')$.
		
		\State create array $M[0,2^d-1]$
		\State $d \gets |\A|$, $d' \gets |\A'|$ 
		\For{$i \leftarrow 0,\ldots, 2^d-1$}
		\State $m_d \leftarrow $ the $d$-bits binary representation of $i$
		\LongState{$m_{d'} \leftarrow$ the projection of $m_d$ to the positions in which the attributes of $\A'$ appear in $\A$}
		\LongState{$i' \leftarrow $ the value in $[0,2^{d'}-1]$ corresponding to $m_{d'}$}
		\State $M[i] \gets M'[i']$
		\EndFor			
		\State \Return $(Q, M)$		
	\end{algorithmic}
	\caption{\texttt{\extend}}	
	\label{alg:extend-body}
\end{algorithm}

\begin{restatable}{lemma}{extendlemma}
 \label{lem:extend}
Let $|\A|=d$ in Definition~\ref{def:extend}. Then, the operation
$\extend(Q_R,\A)$ can be supported in time $O(2^d)$ 
and its output takes $O(2^d)$ words of space.
\end{restatable}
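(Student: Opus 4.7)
The plan is to verify Algorithm~\ref{alg:extend-body} in two steps: first that the \qdag it returns has the correct completion, and then that it meets the stated time and space bounds. The space claim is essentially read off the algorithm: the output is the pair $(Q, M)$, where $Q$ is a single reference (one word) and $M$ is a fixed-size array of $2^d$ words. For the time bound, the algorithm executes a single loop of $2^d$ iterations; each iteration performs a constant number of bit manipulations on $d$-bit integers (forming $m_d$ from $i$, extracting the projection $m_{d'}$, converting back to $i'$, and a single array lookup and assignment). Assuming $d$ is constant (or at least $d \le w$ in the standard word-RAM model), each of these is $O(1)$, so the total cost is $O(2^d)$.

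For correctness, I would work directly from Definition~\ref{def:qdag} applied to the output \qdag $(Q, M)$. By definition, its completion $Q^*$ is an internal node whose $i$-th child is the \qdag $(C(Q, M[i]), M)$. So it suffices to show that this child is, as a \qdag, the extension by $\A \setminus \A'$ of the appropriate sub-relation of $R(\A')$, namely the one living in the sub-grid selected by the leading bits prescribed by the Morton code of $i$. Let $m_d$ be the $d$-bit Morton code of $i$; its bits fix the leading bit of each of the $d$ attributes in $\A$. Since $\text{All}(\A \setminus \A')$ contains every value on the new attributes, the choice of leading bits on $\A \setminus \A'$ imposes no constraint, and the sub-grid of $R \times \text{All}(\A \setminus \A')$ indexed by $m_d$ equals the sub-relation of $R(\A')$ selected by the projection $m_{d'}$ of $m_d$ to the coordinates of $\A'$, itself Cartesian-product-extended with $\text{All}(\A \setminus \A')$. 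But the sub-relation of $R(\A')$ selected by $m_{d'}$ is exactly what the $i'$-th child of the input \qdag $(Q, M')$ represents, i.e., the \qdag $(C(Q, M'[i']), M')$. Since the algorithm sets $M[i] \gets M'[i']$ and passes the same $M$ down to the child, these two \qdags reference the same \qtree node $C(Q, M'[i'])$; and by the same argument applied to that child (using the same $M$ at every level), the entire completion below agrees.

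I do not anticipate a real obstacle; the one thing to write out carefully is the bijection between the Morton code of an index in the extended grid and the projected Morton code in the original grid, because this is what makes the definition of $M[i]$ in terms of $M'[i']$ the right one. Once that correspondence is made explicit, correctness follows directly from Definition~\ref{def:qdag} without any auxiliary induction on $\ell$, since the definition itself is recursive and the same $M$ is reused at every internal node of the completion.
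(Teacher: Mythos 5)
Your proposal is correct, and on the complexity side it follows essentially the same route as the paper: the bound is just the $2^d$ loop iterations times $O(1)$ work each, plus the observation that the output is one \qtree reference and a $2^d$-entry array. The one point where the paper is more careful than you are is the projection step: computing $m_{d'}$ from $m_d$ (a ``parallel bit extract'') is not a single standard word-RAM operation, so the paper implements it as a lookup in a precomputed table of size $O(2^{3d})$ indexed by the two attribute bitvectors and $m_d$; your fallback assumption that $d$ is constant also makes it $O(1)$ (even bit by bit it costs only $O(d)$), but your parenthetical ``$d \le w$'' alone would not suffice without such a table. Where you genuinely diverge is that you also prove correctness of the completion, which the paper's proof omits entirely (it only establishes the time and space bounds, leaving correctness to the informal $\{A,B,D\}\subseteq\{A,B,C,D\}$ discussion preceding the lemma). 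Your argument there is sound --- the key point, that the same $M$ is reused at every level so the Morton-code projection argument propagates down the recursion of Definition~\ref{def:qdag}, is exactly the right one --- though note that ``applying the same argument to each child'' is an induction on the grid side length in all but name.
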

\begin{proof}
We show that Algorithm~\ref{alg:extend-body} meets the conditions of the lemma.
The computations of $m_d$ 
and $i'$ are immaterial (they just interpret a bitvector as a number or vice 
versa). The computation of $m_d'$ is done with a constant table (that
depends only on the database dimension $d$) of size $O(2^{3d})$:%
\footnote{They can be reduced to two tables of size $O(2^{2d})$, but we omit
the details for simplicity.}
The argument $\A$ is given as a bitvector of size $d$ telling which
attributes are in $\A$, the \qdag on $\A'$ stores a bitvector of size $d$ 
telling which attributes are in $\A'$, and the table receives both bitvectors
and $m_d$ and returns $m_d'$.
\end{proof}


\subsection{Join algorithm} \label{sec:joinalg}

Now that we can efficiently represent relations of the form $R(\A')\times \text{All}(\A \setminus \A')$, for $\A' \subseteq \A$, we describe a worst-case optimal implementation of joins over the \qdag representations of the relations.
Our algorithm follows the idea discussed for 
the triangle query: we first extend every \qdag to all the attributes that appear
in the query, so that they all have the same dimension and attributes.
Then we compute their intersection, building a \qtree representing the output of
the query. 
The implementation of this algorithm is surprisingly simple
(see Algorithms~\ref{alg:join} and \ref{alg:and-body}), 
yet worst-case optimal, as we prove later on.
Using \qdags is key for this result; this algorithm would not be at all
optimal if computed over relational instances stored using standard
representations such as B+ trees.
First, we describe how to compute the intersection of several \qdags, and then analyze the running time of the join.

\begin{figure*}[t]
    \begin{minipage}{.37\textwidth}
        \begin{algorithm}[H]
        	\footnotesize		
        	\caption{\small \textsc{MultiJoin}}
        	\label{alg:join}
        	\begin{algorithmic}[1]
        		\Require Relations $R_1,\dots,R_n$, stored as \qdags $Q_1,\dots,Q_n$;
        		each relation $R_i$ is over attributes $\A_i$ and $\A =
        \bigcup \A_i$. 
                \vskip 4pt
                \Ensure A quadtree representing the output of $J = R_1 \bowtie \dots \bowtie R_n$.
                
        		\vskip 10pt
        		\For{$i \leftarrow 1,\ldots,n$} 
        		\State $Q_i^* \leftarrow \extend(R_i,\A)$ \label{line:extend}
        		\EndFor
        		
        		\vskip 10pt
        		\State \Return $\andand(Q_1^*,\dots,Q_n^*)$ 
        	\end{algorithmic}
        \end{algorithm}
    \end{minipage}
    \hspace{.03\textwidth}
    \begin{minipage}{.6\textwidth}
        \begin{algorithm}[H]
        	\begin{algorithmic}[1]
        		\footnotesize		
        		\Require $n$ \qdags $Q_1$, $Q_2$, \ldots ,$Q_n$ representing
        relations $R_1(\A), R_2(\A), \ldots, R_n(\A)$.
        		\Ensure A \qtree representing the relation $\bigcap_{i=1}^{n}
        R_i(\A)$.
        		
        		\vskip 4pt
        		\State $m \leftarrow \min \{\qtvalue(Q_1), \ldots,
        \qtvalue(Q_n)\}$
        		\If {$\ell = 1$} \Return the integer $m$ \EndIf
        		\If {$m=0$} \Return a leaf \EndIf
        		\For {$i \leftarrow 0,\ldots,2^d-1$}
        			\State $C_i \gets \andand(\childat(Q_1,i),\ldots,$ $\childat(Q_n, i))$
        		\EndFor
        		\If {$\max \{\qtvalue(C_0), \ldots,
        \qtvalue(C_{2^{d}-1})\} = 0$}
        		\Return a leaf \EndIf
        		\State \Return a \qtree with children $C_0, \ldots, C_{2^d-1}$
        	\end{algorithmic}
        	\caption{\texttt{\andand}}
        	\label{alg:and-body}	
        \end{algorithm}
    \end{minipage}
\end{figure*}

\smallskip
\noindent
\textbf{Operation \andand}. We introduce an operation \andand, which computes the
intersection of several relations represented as \qdags.

\begin{definition} \label{def:and}
Let $Q_1,\dots,Q_n$ be \qdags representing relations $R_1,\ldots,R_n$, all
over the attribute set $\A$. Operation 
$\andand(Q_1,\dots,Q_n)$ returns a \qtree
$Q$ that represents the relation $R_1 \cap \ldots \cap R_n$.
\end{definition}

We implement this operation by simulating a synchronized traversal among the
completions $C_1, \ldots, C_n$ of $Q_1,\ldots,Q_n$, respectively, obtaining the
\qtree $Q$ that stores the cells that are present in all the \qtrees $C_i$
(see  Algorithm~\ref{alg:and-body}).
We proceed as follows. If $\ell=1$, then all $C_i$ are integers with values $0$
or $1$, and $Q$ is an integer equal to the minimum of the $n$ values. Otherwise,
if any $Q_i$ represents an empty subgrid, then $Q$ is also a leaf representing
an empty subgrid. Otherwise, every $C_i$ is rooted by a node $v_i$ with $2^d$
children, and so is $Q$, where the $j$-th child of its root $v$ is the result of
the \andand operation of the $j$-th children of the nodes $v_1,\dots,v_n$.
However, we need a final {\em pruning} step to restore the \qtree invariants
(line 6 of Algorithm \ref{alg:and-body}): if $\qtvalue(v_i)=0$ for
all the resulting children of $v$, then $v$ must become a leaf and the children be
discarded.
Note that once the \qtree is computed, we can represent it succinctly in linear expected time so that, for instance, it can be cached for future queries involving the output represented by $Q$%
\footnote{
    This consumes linear expected time due to the use of perfect hashing in the compact representation~\cite{BDMRRR05}.
}.

\smallskip
\noindent
\textbf{Analysis of the algorithm}. We compute the output $Q$ of
$\andand(Q_1,\dots,Q_n)$ in time $O(2^d \cdot (||Q_1||+\cdots+||Q_n||))$. More
precisely, the time is bounded by $O(2^d \cdot |Q^+|)$, where $Q^+$ is the
\qtree that would result from Algorithm~\ref{alg:and-body} if we remove the pruning step of line 6. We name this \qtree $Q^+$ as the \emph{non-pruned version} of $Q$. Although the size of the actual output $Q$ can be much smaller than that of $Q^+$,we can still prove that our time is optimal in the worst case.
We start with a technical result.

\begin{restatable}{lemma}{andlemma}
	\label{lem:and}
The \andand operation can be supported in time $O(M \cdot 2^d n \log \ell)$, where $M$ is the maximum
number of nodes in a level of $Q^+$.
\end{restatable}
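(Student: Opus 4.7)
The plan is to decompose the cost of Algorithm~\ref{alg:and-body} into (a) the local work done per recursive invocation and (b) the total number of invocations. First I would bound the local per-call cost by $O(2^d n)$: computing the initial minimum needs one \qtvalue per input ($O(n)$); if the call does not return at line~2 or~3, we iterate over $2^d$ children, each obtained with one \childat per input, which is $O(n)$ thanks to the constant-time support guaranteed by Algorithms~\ref{alg:value}--\ref{alg:child} on the compact representation of Observation~\ref{obs:compactqtree}; and the final maximum in line~6 takes $O(2^d)$.

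Next I would put each recursive call in correspondence with a node of $Q^+$. A call that returns at line~2 or~3 becomes a leaf of $Q^+$, while a call that recurses becomes an internal node whose $2^d$ children are the $2^d$ sub-calls it spawns. The pruning at line~6 may turn such an internal node into a leaf of the \emph{actual} output $Q$, but by the definition of $Q^+$ as the non-pruned version of $Q$, it remains an internal node of $Q^+$. Hence the total number of invocations equals $|Q^+|$, and the total running time is $O(|Q^+| \cdot 2^d n)$.

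Finally I would bound $|Q^+|$ by $M(\log \ell + 1)$: the grid side $\ell$ is a power of $2$, so any \qtree over it has depth at most $\log \ell$, meaning $Q^+$ has at most $\log \ell + 1$ levels, and by definition each level has at most $M$ nodes. Substituting yields $O(M \log \ell) \cdot O(2^d n) = O(M \cdot 2^d n \log \ell)$, as claimed.

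The only delicate point is the correspondence between recursive calls and nodes of $Q^+$: work performed on subtrees that are later discarded at line~6 must still be charged, and this is precisely what $Q^+$ records. Once this bookkeeping is pinned down, the bound follows by a routine multiplication of per-call cost, number of levels, and nodes per level.
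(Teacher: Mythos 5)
Your proof is correct and follows essentially the same route as the paper's: both charge $O(2^d n)$ work to each node of the non-pruned tree $Q^+$ (including subtrees later discarded by the pruning step) and then bound $|Q^+|$ by $M$ nodes per level times $O(\log\ell)$ levels. Your version just makes the call-to-node correspondence and the per-call cost accounting slightly more explicit.
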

\begin{proof}
We show that Algorithm~\ref{alg:and-body} meets the conditions of the lemma.
Let $m_j$ be the number of nodes of depth $j$ in $Q^+$, and then
$M = \max_{0 \leq j < \log\ell} m_j$. The number of steps performed by 
Algorithm~\ref{alg:and-body} is bounded 
by $n \cdot (\sum_{0 \le j < \log \ell} m_j \cdot 2^d) \leq n \cdot M \cdot
\log \ell \cdot 2^d$: In each depth we continue traversing all \qdags $Q_1,\dots,Q_n$ as long as they are all nonempty, and we generate the corresponding nodes in $Q^+$ (even if at the end some nodes will disappear in $Q$).
\end{proof}

All we need to prove (data) optimality is to show that $|Q^+|$ is bounded by the size of the real output of the query.
Recall that, for a join query $J$ on a database $D$, we use $2^{\rho^*(J,D)}$ to denote the AGM bound~\cite{AGM13} of the query $J$ over $D$, that is, the maximum size of the output of $J$ over any relational database having the same number of tuples as $D$ in each relation.

\begin{theorem} \label{thm:main}
Let $J = R_1 \bowtie \dots \bowtie R_n$ be a full join query, and $D$ a database over schema $\{R_1,\dots,R_n\}$, with $d$ attributes in total, and where the 
domains of the relations are in $[0,\ell-1]$.  Let $\A_i$ be the set of attributes of $R_i$, for all $1 \le i \le n$ and $N=\sum_i{|R_i|}$ be the total amount of tuples in the database.
The relations $R_1, \ldots, R_n$ can then be stored within
$\sum_i (|\A_i|+2+o(1)) |R_i| \log\ell + O(n\log d)$
bits, so that the output for $J$ can be computed in time $O(2^{\rho^*(J,D)} \cdot 2^d n \log \min(\ell,N)) = \Tilde{O}(2^{\rho^*(J,D)})$.
\end{theorem}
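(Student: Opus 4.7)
The space bound is almost immediate: apply Observation~\ref{obs:compactqtree} to each $R_i$ separately, which, viewed as $|R_i|$ points in a $|\A_i|$-dimensional grid of side $\ell$, takes $(|\A_i|+2+o(1))|R_i|\log\ell + O(\log |\A_i|)$ bits; summing over $i$ and using $|\A_i|\le d$ gives the stated bound. Correctness of Algorithm~\ref{alg:join} follows from the identity $R_1 \bowtie \cdots \bowtie R_n = \bigcap_i (R_i \times \text{All}(\A\setminus\A_i))$ together with Definitions~\ref{def:extend} and~\ref{def:and}: \extend produces a \qdag whose completion is $R_i \times \text{All}(\A\setminus\A_i)$, and \andand traverses these completions in synchronization to return their intersection as a \qtree.

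For the running time, I would split the work into the \extend phase, bounded by Lemma~\ref{lem:extend} as $O(n\cdot 2^d)$, and the \andand phase, bounded by Lemma~\ref{lem:and} as $O(M\cdot 2^d n\log\ell)$, where $M$ is the maximum number of internal nodes on any single level of the non-pruned tree $Q^+$. All that remains is to show $M\le 2^{\rho^*(J,D)}$, which is the heart of the argument.

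The main obstacle is this AGM-style bound on $M$. My plan is the following. Fix a level $k\in[0,\log\ell)$. For each relation $R_i$ and each tuple $t\in R_i$, let $t|_k$ denote the tuple obtained by keeping only the top $k$ bits of each coordinate of $t$, and set $R_i^{(k)}=\{t|_k : t\in R_i\}$; this is the set of non-empty cells of $Q_i$ at level $k$, and clearly $|R_i^{(k)}|\le|R_i|$. A node exists at level $k$ of the \emph{extended} \qdag $Q_i^*$ iff its projection onto the coordinates $\A_i$ lies in $R_i^{(k)}$ (the other coordinates being unconstrained by \extend). Hence a node exists at level $k$ of $Q^+$ iff the corresponding length-$k$ prefixes, one per attribute of $\A$, simultaneously project into $R_i^{(k)}$ for every $i$; in other words, the level-$k$ nodes of $Q^+$ are in bijection with the tuples of the join $R_1^{(k)} \bowtie \cdots \bowtie R_n^{(k)}$ over $\A$. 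Since this join has exactly the same hypergraph as $J$ and each $|R_i^{(k)}|\le|R_i|$, the AGM inequality~\cite{AGM13} yields $|R_1^{(k)} \bowtie \cdots \bowtie R_n^{(k)}|\le 2^{\rho^*(J,D)}$, so $M \le 2^{\rho^*(J,D)}$ as required.

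Combining the bounds gives $O(2^{\rho^*(J,D)}\cdot 2^d n\log\ell)$ for the full algorithm, which is the claimed asymptotics in terms of $\log\ell$. To sharpen $\log\ell$ to $\log\min(\ell,N)$ I would argue that only the top $\lceil\log N\rceil$ levels can carry internal branching: since each $Q_i$ has at most $|R_i|\le N$ distinct level-$k$ cells, below depth $\log N$ every surviving node of $Q^+$ has a single non-empty descendant path, and the $k^d$-tree representation lets us descend that single path directly rather than visiting every intermediate level. Thus the $\log\ell$ factor in Lemma~\ref{lem:and} can be replaced by $\log\min(\ell,N)$, completing the proof.
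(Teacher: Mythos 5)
Your space bound, the reduction to Lemmas~\ref{lem:extend} and~\ref{lem:and}, and the core argument bounding $M$ all essentially match the paper's proof: the paper likewise truncates every tuple to its first $j$ bits (padding with zeros so as to stay in the same domain), observes that the truncated relations have cardinality at most $|R_i|$, and concludes $M\le 2^{\rho^*(J,D')}\le 2^{\rho^*(J,D)}$ by monotonicity of the AGM bound in the relation sizes. Your phrasing of this as a direct correspondence between level-$k$ nodes and the join of the truncated relations is the same idea, modulo a harmless factor $2^d$: the level-$k$ nodes of $Q^+$ also include the empty children spawned by nonempty level-$(k-1)$ nodes, so what is in bijection with $R_1^{(k)}\bowtie\cdots\bowtie R_n^{(k)}$ is only the set of \emph{nonempty} level-$k$ nodes.

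The genuine gap is in your last step, replacing $\log\ell$ by $\log\min(\ell,N)$. The claim that ``below depth $\log N$ every surviving node of $Q^+$ has a single non-empty descendant path'' is false: having at most $N$ nonempty cells per level bounds the \emph{width} of each level, not where branching occurs. Take $N$ points packed into a subgrid of side $N^{1/d}$ in one corner of an $\ell^d$ grid with $\ell\gg N$: the quadtree is a unary path of length $\log\ell-\frac{1}{d}\log N$ followed by all of its branching in the bottom $\frac{1}{d}\log N$ levels. Worse, for $Q^+$ the relevant per-level width is $2^{\rho^*(J,D)}$, which can far exceed $N$, so the branching of $Q^+$ must occur below depth $\log N$ on generic instances. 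Even reading your argument as charging only for unary stretches, their total length over a quadtree can be $\Theta(N\log\ell)$, and the representation of Observation~\ref{obs:compactqtree} supports constant-time parent/child moves but not constant-time traversal of an entire unary path; skipping such paths during a synchronized traversal of $n$ structures would further require comparing the path labels, machinery the data structure does not provide. The paper closes this case differently: when $\log N = o(\log\ell)$ it re-encodes each attribute value by an $O(\log N)$-bit identifier (keeping a dictionary of the at most $dN$ distinct values, and checking that this still fits in the stated space bound), so the quadtrees themselves have height $O(\log N)$ and the $\log\ell$ factor of Lemma~\ref{lem:and} becomes $\log N$ outright. You would need to adopt that remapping, or supply an actual path-compressed representation with synchronized skipping, to make this step go through.
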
 
\begin{proof}
The space usage is a simple consequence of Observation~\ref{obs:compactqtree}. 
As for the time, to solve the join query $J$ we simply encapsulate the \qtrees representing $R_1, \ldots, R_n$ in \qdags $Q_1,\ldots,Q_n$, and use Algorithm~\ref{alg:join} to compute the result of the query.
We now show that Algorithm~\ref{alg:join} runs in time within the bound of the theorem.
First, assume that $\log \ell$ is $O(\log N)$. 
Let each relation $R_i$ be over attributes $\A_i$, and $\A = \bigcup \A_i$ with
$d = |\A|$. Let $Q_i^* = \extend(Q_i,\A)$, $Q = \andand(Q_1^*,\dots,Q_n^*)$,
and $Q^+$ be the non-pruned version of $Q$.
The cost of the \extend operations is only $O(2^d n)$, according to
Lemma~\ref{lem:extend}, so the main cost owes to the \andand operation.

If the maximum $M$ of Lemma~\ref{lem:and} is reached at the lowest level of 
the decomposition, where we store integers $0$ or $1$, then we are done: 
each $1$ at a leaf of $Q^+$ exists in $Q$ as well
because that single tuple is present in all the relations $R_1,\ldots,R_n$.
Therefore, $M$ is bounded by 
the AGM bound of $J$ and the time of the \andand\ operation is 
bounded by $O(2^{\rho^*(J,D)} \cdot 2^d n \log \ell)$. 

Assume instead that $M$ is the number of internal nodes at depth $0 < j < 
\log \ell$ of $Q^+$ (if $M$ is reached at depth $0$ then $M = 1$). 
Intuitively, we will take the relations at the granularity of level $j$, and
show that there exists a database $D'$ where such a $(2^j)^d$ relation
arises in the last level and thus the answer has those $M$ tuples.

We then construct the following database $D'$ with relations $R_i'$:
For a binary string $c$, let $\prefix(c,j)$ denote the first $j$ bits of $c$. Then, 
for each relation $R_i$ and each tuple $(c_1,\dots,c_{d_i})$ in $R_i$, where $d_i = |\A_i|$, 
let $R_i'$ contain the tuples $(0^{\log \ell - j}\prefix(c_1,j),0^{\log \ell - j}\prefix(c_2,j)\dots,$\linebreak $0^{\log \ell - j}\prefix(c_{d_i},j))$, corresponding to taking the first 
$j$ bits of each coordinate and prepending them with a string of $\log \ell -j$ $0$s. 
While this operation may send two tuples in a relation in $D$ to a single tuple in $D'$, we still have that each relation $R_i'$ in $D'$ contains at most as many 
tuples as relation $R_i$ in $D$. Moreover, if we again store every $R_i'$ as a
\qdag and process their join as in Algorithm~\ref{alg:join}, 
then by construction we have in this case that the leaves of the tree
resulting from the \andand\ operation contain exactly $M$ nodes with $1$, and
that this is the maximum number of nodes in a level of
this tree. Since the leaves represent tuples in the answer, we have that $M \leq 2^{\rho^*(J,D')} \leq 2^{\rho^*(J,D)}$, which completes the proof for the case when $\log \ell$ is $O(\log N)$.

Finally, when $\log N$ is  $o(\log \ell)$, we can convert $O(\log\ell)$ to $O(\log N)$ in the time complexities by storing $R_1,\dots,R_n$ using \qtrees, with a slight variation.
We store the values of the attributes appearing in any relation in an auxiliary data structure (e.g., an array), and associate an $O(\log N)$-bits identifier to each different value in $[0,\ell-1]$ that appears in $D$ (e.g., the index of the corresponding value in the array). In this case, we represent the relations in \qtrees, but using the identifiers of the attribute values instead of the values themselves. This representation requires at most $dN \log \ell$ bits for the representation of the distinct attribute values and $O(dN \log N)$ bits for the representation of the \qtrees. Thus, in total it requires $dN \log \ell + O(dN \log N) = dN \log \ell (1 + O(\log N / \log \ell)) = dN \log \ell (1 + o(1))$, which is within the stated bound.
Note that in both cases, the height of the \qtrees representing the relations is $O(\log N)$, and this is the multiplying factor in the time complexities.
\end{proof}

\section{Extending Worst-Case Optimality to More General Queries}\label{sec-workflow}


In this section we turn to design worst-case optimal algorithms for more expressive queries. At this point it should be clear that we can deal with set operations: we already studied the \emph{intersection} 
(which corresponds to operation \andand over the \qdags), and will show that 
\emph{union} (operation \oror) and \emph{complement} (operation
\notnot) can be solved optimally as well. 
What is most intriguing, however, is whether we can obtain worst-case
optimality on combined relational formulas.
We introduce a worst-case optimal algorithm to evaluate
formulas expressed as combinations of join, union, and complement operations (which we refer to as JUC-queries; note that intersection is a particular case of join). 
We do not study other operations like \emph{selection} and
\emph{projection} because these are easily solved
in time essentially proportional to the size of the output, but refer to Appendix~\ref{sec-more-stuff} for more details on how projection interplays with the rest of our framework. 

The key ingredient of our algorithm is to deal with these operations in a
\emph{lazy} form, allowing unknown intermediate results so that all components of a formula are evaluated simultaneously. To do this we introduce lazy \qdags (or \lqdags), an alternative to \qdags that can navigate over the \qtree representing the output of a formula without the need to entirely evaluate the formula.
We then give a worst-case optimal algorithm to compute the \emph{completion} of an \lqdag, that is, the \qtree of the grid represented by the \lqdag.

\subsection{Lqdags for relational formulas}

To support worst-case optimal evaluation of relational formulas we introduce
two new ideas: we add ``full leaves'' to the \qtree representation to denote
subgrids full of 1s, and we introduce  \lqdags to represent the result of a
formula as an {\em implicit} \qtree that can be navigated without fully evaluating the formula.


\begin{algorithm}[t]
\footnotesize
        \begin{algorithmic}[1]
                \Require \qdag $(Q,M)$ with grid side $\ell$.
                \Ensure Value $0$ or $1$ if the grid represented by $Q$ 
is totally empty or full, respectively, otherwise \internal.
                \vskip 4pt
                \If {$Q$ is a leaf} \Return the integer $0$ or $1$ associated with $Q$ \EndIf
                \State \Return \internal
        \end{algorithmic}
        \caption{\qtvalue on extended \qdags}
        \label{alg:value2}
\end{algorithm}

While \qtree leaves representing a single cell store the cell value, $0$ or 
$1$, \qtree leaves at higher levels always represent subgrids full of $0$s.
We now generalize the representation, so that \qtree leaves at any level store
an integer, $0$ or $1$, which is the value of all the cells in the subgrid 
represented by the leaf. The generalization impacts on the way to compute 
\qtvalue, depicted in Algorithm~\ref{alg:value2}. We will not use \qdags in
this section, however; the \lqdags build directly on \qtrees.
In terms of the compact 
representation, this generalization is 
implemented by resorting to an impossible \qtree configuration: an internal
node with all zero children \cite{dBABNP13}. Note that replacing a full subgrid
with this configuration can only decrease the size of the 
representation.

The second novelty, the \lqdags, are defined as follows. 

\begin{definition}[\lqdag] \label{def:lqdag}
	An \lqdag $L$ is a pair $(f, o)$, where $f$ is a \emph{functor} and $o$ is a list of \emph{operands}.
	The {\em completion} of $L$ is the \qtree $Q_R = Q_R(\A)$ 
representing relation $R(\A)$ if $L$ is as follows:
	\begin{enumerate}
		\item $(\qtreel, Q_R)$, where the \lqdag just represents $Q_R$;
		
		\item $(\notl, Q_{\overline{R}})$, where $Q_{\overline{R}}$ is
the \qtree representing the complement of $Q_R$;
		
		\item $(\andl, L_1, L_2)$, where $L_1$ and $L_2$ are \lqdags
and $Q_R$ represents the intersection of their completions;
		
		\item $(\orl, L_1, L_2)$, where $L_1$ and $L_2$ are \lqdags 
and $Q_R$ represents the union of their completions;
		
		\item $(\extendl, L_1, \A)$, where \lqdag $L_1$
represents $R'(\A')$, $\A' \subseteq \A$, and $Q_R$ represents $R(\A) = R'(\A')\times \text{All}(\A \setminus \A')$.
	\end{enumerate}
\end{definition}

Note that, for a \qtree $Q_R$ representing a relation $R(\A')$, and a set of attributes $\A$, the \qdag $Q_R^*=(Q_R,M_\A)$ that represents the relation $R \times \text{All}(\A \setminus \A')$ can be expressed as the \lqdag $(\extendl, (\qtreel,Q_R), \A)$. In this sense, \lqdags are extensions of \qdags.
To further illustrate the definition of \lqdags, consider the triangle query $R(A,B) \bowtie
S(B,C) \bowtie T(A,C)$, with $\A = \{A, B, C\}$ and the relations represented
by \qtrees $Q_R$, $Q_S$, and $Q_T$.
This query can then be represented as the \lqdag
\begin{eqnarray*}
(\andl,(\andl,(\extendl,(\qtreel,Q_R),\A),(\extendl,(\qtreel,Q_S),\A)),
(\extendl,(\qtreel,Q_T),\A)).
\end{eqnarray*}
It is apparent that one can define other operations, like \joinl and \diffl, by combining the operations defined above:
\begin{eqnarray*}
	(\joinl, L_1(\A_1), L_2(\A_2)) &=& (\andl, (\extendl, L_1, \A_1 \cup \A_2), (\extendl, L_2, \A_1 \cup \A_2)) \\
		(\diffl, L_1, L_2) &=& (\andl, L_1, (\notl, L_2))	
\end{eqnarray*}

Note that in the definition of the \lqdag for \notl, the operand is a 
\qtree instead of an \lqdag, and then, for example, $L_2$ should be a
\qtree in the definition of \diffl, in principle. However, we can easily get around that restriction by pushing down the \notl operators until the operand is a \qtree or the \notl is cancelled with
another \notl.
For instance, a \notl over an \lqdag $(\andl, Q_1, Q_2)$ is equivalent to 
$(\orl$, $(\notl, Q_1)$, $(\notl, Q_2))$, and analogously with the other 
functors.
The restriction, however, does limit the types of formulas for which we achieve
worst-case optimality, as shown later.

To understand why we call \lqdags lazy, consider the operation $Q_1\ \andand\ Q_2$ over \qtrees $Q_1, Q_2$.
If any of the values at the roots of $Q_1$ or $Q_2$ is $0$, then the result of the operation is for sure a leaf with value $0$.
If any of the values is $1$, then the result of the operation is the other.
However, if both values are \internal, one cannot be sure of the value of the
root until the \andand between the children of $Q_1$ and $Q_2$ has been evaluated.
Solving this dependency eagerly would go against worst-case optimality: it
forces us to fully evaluate parts of the formula without considering it as a whole.
To avoid this, we allow the \qtvalue of a node represented by an \lqdag to be, apart from $0$, $1$, and \internal, the special value \unknown. 
This indicates that one cannot determine the value of the node without computing the values of its children. 

As we did for \qdags, in order to simulate the navigation over the completion $Q$ of an \lqdag $L$ we need to describe how to obtain the value of the root of $Q$, and how to obtain an \lqdag whose completion is the $i$-th child of $Q$, for any given $i$.
We implement those operations in
Algorithms~\ref{alg:valuenot-body}--\ref{alg:childatextend-body}, all constant-time.
Note that \childat can only be invoked when $\qtvalue=$ \internal or \unknown.
The base case is $\qtvalue(\qtreel,Q)=\qtvalue(Q)$ and 
$\childat((\qtreel,Q),i)=\childat(Q,i)$, where we enter the \qtree and
resort to the algorithms based on the 
compact representation of $Q$.
We will assume that $\qtvalue(Q)$ returns \internal for internal nodes, and thus the implementation of $\qtvalue$ for $\extendl$ is trivial (compare Algorithms~\ref{alg:value2} and~\ref{alg:valueextend-body} under this assumption).

\begin{figure}
\begin{minipage}{.45\textwidth}
\begin{algorithm}[H]
\footnotesize
	\begin{algorithmic}[1]
		\Require A \Qtree $Q$.
		\vskip 4pt
		\Ensure Value of the root of $(\notl,Q)$.
		\vskip 10pt
		\State \Return $1-\qtvalue(Q)$
	\end{algorithmic}
	\caption{\qtvalue function for \notl}
	\label{alg:valuenot-body}	
\end{algorithm} 
\end{minipage}
\hskip .03\textwidth
\begin{minipage}{.52\textwidth}
\begin{algorithm}[H]
\footnotesize
	\begin{algorithmic}[1]
		\Require A \Qtree $Q$ in dimension $d$, and an integer $0 \le i < 2^d$.
		\Ensure An \lqdag for the $i$-th child of $(\notl,Q)$.
		\vskip 4pt
		\State \Return $(\notl,\childat(Q,i))$
	\end{algorithmic}
	\caption{\childat function for \notl}
	\label{alg:childatnot-body}	
\end{algorithm}
\end{minipage}
\end{figure}

\begin{figure}
\vskip -14pt
\begin{minipage}{.45\textwidth}
\begin{algorithm}[H]
	\footnotesize
	\begin{algorithmic}[1]
		\Require \Lqdags $L_1$ and $L_2$.
		\Ensure The value of the root of $(\andl,L_1,L_2)$.
		
		\vskip 4pt
		\If {$\qtvalue(L_1)=0$ {\bf or} $\qtvalue(L_2)=0$} \Return $0$ \EndIf
		\If {$\qtvalue(L_1)=1$} \Return $\qtvalue(L_2)$ \EndIf
		\If {$\qtvalue(L_2)=1$} \Return $\qtvalue(L_1)$ \EndIf
		\State \Return \unknown 
	\end{algorithmic}
	\caption{\qtvalue function for \andl}
	\label{alg:valueand-body}	
\end{algorithm}
\end{minipage}
\hskip .03\textwidth
\begin{minipage}{.52\textwidth}
\begin{algorithm}[H]
\footnotesize
	\begin{algorithmic}[1]
		\Require \Lqdags $L_1$ and $L_2$ in dimension $d$, integer
$0 \le i < 2^d$.
                \Ensure An \lqdag for the $i$-th child of $(\andl,L_1,L_2)$.
		\vskip 4pt
		\If {$\qtvalue(L_1)=1$} \Return $\childat(L_2,i)$ \EndIf
		\If {$\qtvalue(L_2)=1$} \Return $\childat(L_1,i)$ \EndIf

		\vskip 4pt
		\State \Return $(\andl,\childat(L_1,i),\childat(L_2,i))$
		\vskip 6pt
	\end{algorithmic}
	\caption{\childat function for \andl}
	\label{alg:childatand-body}	
\end{algorithm}
\end{minipage}
\end{figure}

\begin{figure}[t]
\begin{minipage}{.45\textwidth}
\begin{algorithm}[H]
\footnotesize
	\begin{algorithmic}[1]
		\Require \Lqdags $L_1$ and $L_2$.
		\Ensure The value of the root of $(\orl,L_1,L_2)$.
		\vskip 4pt
		\If {$\qtvalue(L_1)=1$ {\bf or} $\qtvalue(L_2)=1$} \Return $1$ \EndIf
		\If {$\qtvalue(L_1)=0$} \Return $\qtvalue(L_2)$ \EndIf
		\If {$\qtvalue(L_2)=0$} \Return $\qtvalue(L_1)$ \EndIf
		\State \Return \unknown 
	\end{algorithmic}
	\caption{\qtvalue function for \orl}
	\label{alg:valueor-body}	
\end{algorithm}
\end{minipage}
\hskip .03\textwidth
\begin{minipage}{.52\textwidth}
\begin{algorithm}[H]
\footnotesize
	\begin{algorithmic}[1]
		\Require	
		\Lqdags $L_1$ and $L_2$ in dimension $d$, integer $0 \le i < 2^d$.
		\Ensure An \lqdag for the $i$-th child of $(\orl,L_1,L_2)$.
		\vskip 4pt
		\If {$\qtvalue(L_1)=0$} \Return $\childat(L_2,i)$ \EndIf
		\If {$\qtvalue(L_2)=0$} \Return $\childat(L_1,i)$ \EndIf
		
		\vskip 4pt
		\State \Return $(\orl,\childat(L_1,i),\childat(L_2,i))$
		\vskip 6pt
	\end{algorithmic}
	\caption{\childat function for \orl}
	\label{alg:childator-body}	
\end{algorithm}
\end{minipage}
\end{figure}

\begin{figure}[t]
\vspace*{-18pt}
\begin{minipage}{.45\textwidth}
\begin{algorithm}[H]
\footnotesize
        \begin{algorithmic}[1]
			\vskip 5pt	
			\Require
				\Statex \Lqdag $L_1(\A')$, set $\A \supseteq \A'$.
				
				\Ensure
				\Statex Value of the root of $(\extendl,L_1,\A)$.

                \vskip 10pt
				\State \Return $\qtvalue(L_1)$

				\vskip 22pt
        \end{algorithmic}
        \caption{\qtvalue function for \extendl}
        \label{alg:valueextend-body}
\end{algorithm}
\end{minipage}
\hskip .03\textwidth
\begin{minipage}{.52\textwidth}
\begin{algorithm}[H]
\footnotesize
        \begin{algorithmic}[1]
				\Require
				\Statex \Lqdag $L_1(\A')$, set $\A \supseteq \A'$, integer $0 \le i < 2^{|\A|}$.

                \Ensure An \lqdag for the $i$-th child of $(\extendl,L_1,\A)$.

                \vskip 4pt
                \State $d \gets |\A|$, $d' \gets |\A'|$
                \State $m_d \leftarrow $ the $d$-bits binary representation of
$i$
                \LongState{$m_{d'} \leftarrow$ the projection of $m_d$ to the
positions in which the attributes of $\A'$ appear in $\A$}
                \State $i' \leftarrow $ the value in $[0,2^{d'}-1]$
corresponding to $m_{d'}$
                \State \Return $(\extendl,\childat(L_1,i'),\A)$
        \end{algorithmic}
        \caption{\childat function for \extendl}
        \label{alg:childatextend-body}
\end{algorithm}
\end{minipage}
\end{figure}

Note that the recursive calls of
Algorithms~\ref{alg:valuenot-body}-\ref{alg:childatextend-body} traverse the
nodes of the relational formula (fnodes, for short), and terminate immediately 
upon reaching an fnode of the form $(\qtreel,Q)$. Therefore, their time 
complexity depends only on the size of the formula represented by the \lqdag.
We show next how, using these implementations of \qtvalue and \childat, one can efficiently evaluate a relational formula using \lqdags.

\subsection{Evaluating JUC queries}

To evaluate a formula $F$ represented as an \lqdag $L_F$, we compute the 
completion $Q_F$ of $L_F$, that is, the \qtree $Q_F$ representing the 
output of $F$.

To implement this we introduce the idea of \emph{super-completion} of an \lqdag.
The super-completion $Q_F^+$ of $L_F$ is the \qtree induced by navigating
$L_F$, and interpreting the values \unknown as \internal 
(see Algorithm~\ref{alg:supercompletion}).
Note that, by interpreting values \unknown as \internal, we are disregarding
the possibility of pruning resulting subgrids full of $0$s or $1$s and replacing them by single leaves with values $0$ or $1$ in $Q_F$. Therefore, $Q_F^+$ is a 
\emph{non-pruned} \qtree (just as $Q^+$ in Section~\ref{sec:joinalg}) that
nevertheless represents the same points of $Q_F$.
Moreover, $Q_F^+$ shares with $Q_F$ a key property: all its nodes with value $1$, including the
last-level leaves representing individual cells, correspond to actual tuples 
in the output of $F$.

To see how \lqdags are evaluated, let us consider the 
query $F=R(A,B) \bowtie S(B,C) \bowtie \overline{T}(A,C)$. This corresponds to 
an \lqdag $Q_F$:
\begin{eqnarray*}
(\andl,(\andl,(\extendl,(\qtreel,Q_R),\A),(\extendl,(\qtreel,Q_S),\A)),
(\extendl,(\notl,Q_T),\A)).
\end{eqnarray*}

Assuming some of the trees involved have internal nodes, the super-completion
$Q_F^+$ first produces $8$ children. 
Suppose the grid of $T$ is full of $1$s in the first quadrant ($00$).
Then the first child ($00$) of $Q_T$ has value $1$, which becomes value $0$ in
$(\notl,Q_T)$. This implies that $(\extendl,(\notl,Q_T))$ also yields value $0$ in octants $000$ and $010$.
Thus, when function $\childat$ is called on child $000$ of $Q_F$, our $0$ is 
immediately propagated and \childat returns $0$, meaning that there are no 
answers for $F$ on this octant, without ever consulting the \qtrees $Q_R$
and $Q_S$ (see Figure~\ref{fig:example3} for an illustration).
On the other hand, if the value of the child $11$ of $T$ is $0$, then
$(\extendl,(\notl,Q_T))$ will return value $1$ in octants $101$ and 
$111$. This means that the result on this octant corresponds to the result of
joining $R$ and $S$; indeed \childat towards $101$ in $Q_F$ returns
\begin{eqnarray*}
(\andl,\childat((\extendl,(\qtreel,Q_R),\A),101), \childat((\extendl,(\qtreel,Q_S),\A),101)).
\end{eqnarray*}
If $\childat((\extendl,(\qtreel,Q_R),\A),101)$ and 
$\childat((\extendl,(\qtreel,Q_S),\A),101)$ are trees with internal nodes, 
the resulting \andl can be either an internal node or a leaf with value $0$
(if the intersection is empty), though not a leaf with value $1$. Thus, for 
now, the \qtvalue of this node is unknown, a \unknown.
See Figure~\ref{fig:example3} for an illustration.

\begin{SCfigure}[.78][t]
	\includegraphics[page=5,scale=.8]{images/quadtree}
	\vspace*{-.1cm}	
	\caption{
		Illustration of the syntax tree of an \lqdag 
		for the formula $(R(A,B) \bowtie S(B,C)) \bowtie \overline{T}(A,C)$.
		The \qtrees $Q_R, Q_S, Q_T$ represent the relations $R, S, T$, respectively. We show the top values of $Q_F^+$ on top and of $Q_T$ on the bottom. The gray upward arrows show how the value $1$ in the quadrant $00$ of $Q_T$ becomes $0$s in octants $000$ and $010$ of $Q_F^+$ without accessing $Q_R$ or $Q_S$. The red upward arrows show how the value $0$ in the quadrant $11$ of $Q_T$ makes the quadrants $101$ and $111$ of $Q_F^+$ depend only on their left child 
		(and, assuming their value is \internal, becomes a \unknown in $Q_F^+$).\\
		~
	}
	\label{fig:example3}
\end{SCfigure}

Note that the running time of Algorithm~\ref{alg:supercompletion} is 
$O(|Q_F^+|)$. 
One can then compact $Q_F^+$ to obtain $Q_F$, in time $O(|Q_F^+|)$ as well,
with a simple bottom-up traversal.
Thus, bounding $|Q_F^+|$ yields a bound for the running time of evaluating $F$.
While $|Q_F^+|$ can be considerably larger than the actual size $|Q_F|$ of the
output, we show that $|Q_F^+|$ is bounded by the worst-case output size of 
formula $F$ for a database with relations of approximately the same size. 
To prove this, the introduction of values \unknown plays a key role.%
\footnote{In an implementation, we could simply use \internal instead of 
\unknown, without indicating that we are not yet sure that 
the value is \internal: we build $Q_F^+$ assuming it is, and only make sure 
later, when we compact it into $Q_F$.}

\begin{algorithm}[t]
\footnotesize
	\begin{algorithmic}[1]
		\Require An $\lqdag$ $L_F$ whose completion represents a formula $F$ over relations with $d$ attributes.
		\Ensure The super-completion $Q_F^+$ of $L_F$.
		
		\vskip 4pt
		\If {$\qtvalue(L_F) \in \{0,1\}$} \Return a leaf with value 
			$\qtvalue(L_F)$ \EndIf
		\LongState{\Return an internal node with children\\ 
			{\color{white} . \;\;}
			\Big($\texttt{\sc SCompletion}\big(\childat(L_F, 0)\big), \ldots,$		
			$\texttt{\sc SCompletion}\big(\childat(L_F, 2^d-1)\big)$\Big)}
		
	\end{algorithmic}
	\caption{\texttt{\texttt{\sc SCompletion}}}
	\label{alg:supercompletion}
\end{algorithm}

\smallskip
\noindent
\textbf{The power of the \unknown values}. 
Consider again Algorithm \ref{alg:supercompletion}. The lowest places in $L_F$ where \unknown values are introduced are the 
\qtvalue of \andl and \orl \lqdags where both operands have $\qtvalue=$
\internal. We must then set the $\qtvalue$ to \unknown instead of \internal
because, depending on the evaluation of the children of the 
operands, the \qtvalue can turn out to be actually $0$ for \andl or $1$ for \orl.
Once produced, a value \unknown is inherited by the ancestors in the formula
unless the other value is $0$ (for \andl) or $1$ (for \orl).

Imagine that a formula $F$ involves $n$ relations $R_1, \ldots, R_n$
represented as \qtrees in dimension $d$, including no negations.
Suppose that we {\em trim} the \qtrees of $R_1, \ldots, R_n$ by removing all
the levels at depth higher than some $j$ (thus making the $j$-th level the last 
one) and assuming that the internal nodes at level $j$ become
leaves with value $1$. We do not attempt to compact the resulting \qtrees, so
their nodes at levels up to $j-1$ stay identical and with the same \qtvalue. 
If we now compute $Q_F^+$ over those (possibly non-pruned) \qtrees, the 
computation will be identical up to level $j-1$, and in level $j$ every 
internal node in the original $Q_F^+$, which had value \internal, will now
operate over all $1$s, and thus will evaluate to $1$ because \andand and \oror
are monotonic.

Thus, these nodes belong to the output of $F$ over the relations $R_1',
\ldots,R_n'$ induced by the trimmed \qtrees (on smaller domains of size $\ell'
= 2^j$), 
with sizes $|R_1|' \le |R_1|, \ldots,\! |R_n|' \le |R_n|$.
This would imply, just as in the proof of Theorem~\ref{thm:main}, a bound on
the maximum number of nodes in a level of $Q_F^+$, thus proving the worst-case
optimality of the size of $Q_F^+$ (up to $\log min(\ell, N)$-factors, and factors depending on the query size), and thus the worst-case
optimality of Algorithm~\ref{alg:supercompletion} in data complexity.

However, this reasoning fails when one trims at the $j$-th level a \qtree $Q$
that appears in an \lqdag $L=(\notl, Q)$, because the value $1$ of the nodes
at the $j$-th level of $Q$ after the trimming changes to $0$ in $L$.
So, to prove that our algorithm is worst-case optimal we cannot rely only on relations obtained by trimming those that appear in the formula. We need
to generate new \qtrees for those relations under a \notl operation that preserve the values of the completion of \notl after the trimming.
Next we formalize how to do this.

\smallskip
\noindent
\textbf{Analysis of the algorithm}.
Let $L_F$ be an \lqdag for a formula $F$.
The \emph{syntax tree} of $F$ is the directed tree formed by the fnodes in
$F$, with an edge from fnode $L$ to fnode $L'$ if $L'$ is an operand of $L$.
The leaves of this tree are always {\em atomic expressions}, that is, the
fnodes, with functors \qtreel and \notl, that operate on one \qtree (see Figure~\ref{fig:example3} again).
We say that two atomic expressions $L_1$ and $L_2$ are equal if both their
functors and operands are equal.
For example, in the formula 
\[	F = (\orl, (\andl, (\qtreel, Q_R), (\qtreel, Q_S)), 
		   (\andl, (\qtreel, Q_R), (\qtreel, Q_T)))
\]
there are three different atomic expressions, $(\qtreel,Q_R)$, $(\qtreel,Q_S)$,
and $(\qtreel, Q_T)$, while in $F'=(\andl, (\qtreel, Q_R), 
(\notl, Q_R))$ there are two atomic expressions.
Notice that in formulas like $F'$, where a relation appears both negated and
not negated, the two occurrences are seen as different atomic expressions. We return later to the consequences of this definition.

The following lemma is key to bound the running time of Algorithm~\ref{alg:supercompletion} while evaluating a formula $F$.

\begin{restatable}{lemma}{wcogeneral}
\label{lem:wco-general}
Let $F$ be a relational formula represented by an \lqdag $L_F$ in
dimension $d$, and let $Q_F^+$ be the super-completion of $F$.
	Let $Q_1, \ldots, Q_n$ be the \qtree operands of the different atomic expressions of $F$, and $R_{1}(\A_1), \ldots, R_{n}(\A_n)$ be the (not necessarily different) relations represented by these \qtrees, respectively.
	Let $M$ be the maximum number of nodes in a level of $Q_F^+$.
	Then, there is a database with relations $R_1'(\A_1), \ldots,
R_n'(\A_n)$ of respective sizes $O(2^d|Q_1|), \ldots, O(2^d|Q_n|)$, such that
the output of $F$ evaluated over it 
has size
$\Omega(M/2^d)$.
\end{restatable}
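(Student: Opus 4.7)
The plan is to adapt the trimming argument from the proof of Theorem~\ref{thm:main} to the \lqdag setting, paying special care to the \notl operator. Let $j^*$ be the level of $Q_F^+$ at which the maximum $M$ is attained, and let $\ell'=2^{j^*}$. For each atomic expression of $F$, I would construct a \qtree on a restricted $d_i$-dimensional grid of side $\ell'$: for $(\qtreel,Q_i)$, trim $Q_i$ after level $j^*$ by converting every internal node at level $j^*$ into a leaf with value~$1$; for $(\notl,Q_i)$, trim in the opposite way by converting every internal node at level $j^*$ into a leaf with value~$0$. The resulting trimmed \qtrees represent the relations $R_1',\ldots,R_n'$ to be inserted into the database $D'$. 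Because input \qtrees have $1$-leaves only at the deepest level, the number of $1$-leaves in each trimmed \qtree is bounded by the number of internal nodes of $Q_i$ at levels $\le j^*$, which is at most $|Q_i|$; thus $|R_i'|=O(|Q_i|)=O(2^d|Q_i|)$ as required.

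The core step is to prove by structural induction on the syntax tree of $F$ that for every one of the $M$ nodes at level $j^*$ of $Q_F^+$, the super-completion $Q_{F'}^+$ assigns value~$1$ at the corresponding node at level $j^*$ of the smaller grid. The base cases are forced by the asymmetric trimming: if an atomic expression $(\qtreel,Q_i)$ contributes a descent at level $j^*$ then its original $\qtvalue$ was \internal, which after trimming becomes~$1$; if $(\notl,Q_i)$ contributes a descent then $Q_i$ itself was \internal at level $j^*$ (the only way $\notl$ can return something other than $0$ or $1$), and the reversed trimming turns $Q_i$ into a $0$-leaf, so $\notl$ returns~$1$. The inductive step follows from the monotonicity of \andl and \orl on non-zero, respectively non-one, operands (inspecting Algorithms~\ref{alg:valueand-body} and~\ref{alg:valueor-body}): whenever $Q_F^+$ descends through an \andl or \orl, each operand is either indeterminate (so becomes $1$ by the I.H.) or is already $1$ (respectively $0$), preserved by trimming; in either case the combined value at level $j^*$ collapses to~$1$. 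The \extendl case is immediate since $\qtvalue$ and $\childat$ for \extendl stay within the same level of the operand.

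Since each of the $M$ level-$j^*$ nodes of $Q_F^+$ thus produces a $1$ at the bottom level of $Q_{F'}^+$, and distinct nodes of $Q_F^+$ correspond to distinct cells of the $(\ell')^d$ grid, the output of $F$ over $R_1',\ldots,R_n'$ contains $\Omega(M)$ tuples, which absorbs the $\Omega(M/2^d)$ bound in the statement (with a one-level slack available should the argument need to be shifted from $j^*$ to $j^*-1$). The main obstacle is precisely the \notl case: the uniform rule \emph{internal~$\to$~1} used in Theorem~\ref{thm:main} fails under negations, because a trimmed $1$-leaf inside a \notl would flip to $0$ and destroy the intended monotonicity. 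The resolution is to choose the trimming rule per atomic expression, which is legitimate because the lemma treats $(\qtreel,Q)$ and $(\notl,Q)$ as different atomic expressions even when they share the same $Q$, so the two corresponding relations $R_i'$ may be built independently.
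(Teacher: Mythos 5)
Your construction and overall strategy coincide with the paper's: trim the operand \qtrees at the level where the maximum $M$ is attained, use the opposite trimming rule for \notl atoms (internal~$\to$~$0$ rather than internal~$\to$~$1$) so that the negation still evaluates to $1$ there, and propagate the $1$s up the syntax tree using monotonicity of \andl and \orl. The justification you give for why the per-atom trimming is legitimate---that the lemma treats $(\qtreel,Q)$ and $(\notl,Q)$ as distinct atomic expressions, so their trimmed relations may be built independently---is exactly the paper's point as well.

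There is, however, a genuine counting error in your core step. It is not true that every one of the $M$ nodes at level $j^*$ of $Q_F^+$ acquires value $1$ after trimming. A node is created at level $j^*$ whenever its \emph{parent} at level $j^*-1$ has value \internal or \unknown, but the node itself may well have value $0$: consider an \andl whose two operands are \internal at level $j^*-1$ (so Algorithm~\ref{alg:supercompletion} descends) and one of whose operands has a $0$-leaf child at level $j^*$. Such $0$-valued nodes are genuine leaves of the input \qtrees, are preserved by your trimming, and remain $0$ in $Q_{F'}^+$; in the worst case \emph{all} $M$ nodes at level $j^*$ are of this kind and your level-$j^*$ count yields nothing. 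Your induction is sound, but it applies only to nodes whose value is \internal or \unknown, and the guaranteed supply of those sits at level $j^*-1$: each of the $M$ nodes at level $j^*$ has such a parent, so there are at least $M/2^d$ of them, and trimming at level $j^*-1$ turns each into a $1$ of the output. This is not an optional ``one-level slack''---it is the actual argument, and it is precisely why the lemma promises only $\Omega(M/2^d)$ rather than $\Omega(M)$. Once the argument is shifted to level $j^*-1$ (as the paper does), the rest of your proof, including the asymmetric treatment of \notl and the size bounds on the trimmed relations, goes through.
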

\begin{proof}
Let $m_l$ be the number of nodes in level $l$ of $Q_F^+$ and $j$ be a level
where $M = m_j$ is maximum. 
	We assume that $j>1$, otherwise $M=O(1)$ and the result is trivial.
	We first bound the number of nodes with value \internal at the
$(j-1)$-th level. 
	By hypothesis, $m_j \ge m_{j-1}$, and since a node in $Q_F^+$ is
present at level $j$ only if its parent at level $j-1$ has value \internal, in
the $(j-1)$-th level there are at least $m_j/2^d$ nodes with value
\internal. 
	
	Now, let $A_1, \ldots, A_n$ be the atomic expressions of $F$,
	and let $Q_1',\ldots, Q_n'$ be the \qtrees that result from trimming the levels at depths higher than $j-1$ from $Q_1, \ldots, Q_n$, respectively.
	Consider the completion $A_i^*$ of $A_i$ evaluated over $Q_i$, and the
completion ${A_i^{*}}'$ of $A_i$ evaluated over (the possibly non-pruned) $Q_i'$, for all $1 \le i \le n$.
	If it is always the case that the first $j-1$ levels of $A_i^*$ are respectively equal to the $j-1$ levels of ${A_i^{*}}'$ then we are done.
	To see why, let ${Q_F^+}'$ be the super-completion of $F$ when evaluated over $Q_1',\ldots, Q_n'$.
	The first $j-2$ levels of $Q_F^+$ will be the same as those of ${Q_F^+}'$ because the same results of the operations are propagated up from the leaves of the syntax tree of $F$ before and after the trimming.
	Moreover, in the $(j-1)$-th level ${Q_F^+}'$ (its last level) the nodes
with value $1$ are precisely the nodes with value $1$ or \internal in
${Q_F^+}$, where we note that: (i) there are at least $m_j/2^d$
of them; and (2) they belong to the output of $F$ over the relations $R_1', \ldots, R_n'$ represented by $Q_1',\ldots, Q_n'$. 

	We know that $|R_1'| \le |R_1|, \ldots, |R_n'| \le |R_n|$.
	However, the values of $R_1', \ldots, R_n'$ correspond to a smaller universe. 
	This can be remedied by simply appending $(\log \ell-j)$ 0's at the beginning of the binary representation of these values.
	This would yield the desired result: we have $n$ relations
over the same set of attributes as the original ones, with same respective
cardinality, and such that when $F$ is evaluated over them the output size is
$\Omega(m_j/2^d)$.
	
	However, for atomic expressions of the type $A_i=(\notl,Q_i)$ it is not the case that the first $j-1$ levels of ${A_i^{*}}$ coincide with the $j-1$ levels of ${A_i^{*}}'$.
	Anyway, we can deal with this case: their first $j-2$ levels will coincide, and in the last level, the value of a node present in ${A_i^{*}}$ is the negation of the value of the homologous node in ${A_i^{*}}'$.
	Thus, instead of choosing the \qtree $Q_i'$ that results from trimming
$Q_i$, we choose the \qtree $Q_i''$ in which the first $j-2$ levels are the
same as $Q_i'$, and the $(j-1)$-th level results from negating the value of every node in $Q_i'$.
	Note that if we let now  ${A_i^{*}}'$ be the completion of $A_i$ evaluated over $Q_i''$, then the first $j-1$ levels of ${A_i^{*}}$ will be exactly same as the $j-1$ levels of ${A_i^{*}}'$. Finally, note that the size of the relation represented by $Q_i''$ cannot be larger than $2^d|Q_i|$.
	The result of the lemma follows.
\end{proof}

Using the same reasoning as before we can  bound the time needed to compute 
the super-completion $Q_F^+$ of an \lqdag $L_F$ in dimension $d$ involving
\qtrees representing $R_{1}, \ldots, R_{n}$. Since $M$ is the 
maximum number of nodes in a level of $Q_F^+$, the number of nodes in $Q_F^+$ is 
at most $M \log \ell$. Now, each node in $Q_F^+$ results from the application 
of $|F|$ operations on each of the $2^d$ children being generated, all of which
take constant time. Thus the super-completion can be computed in time 
$O(M \cdot 2^d |F| \log \ell)$. If we use $F(D)^*$ to denote the size of the 
maximum output of the query $F$ over instances with relations $R_1, 
\ldots, R_n$ of respective sizes $O(2^d|Q_1|), \ldots, O(2^d|Q_n|)$, 
then by Lemma~\ref{lem:wco-general} the query $F$ can be computed in time 
$O(F(D)^* \cdot 2^{2d} |F| \log \ell)$. This means that the algorithm is indeed
worst-case optimal. 

The requirement of different atomic expressions 
is because we need to consider $R$ and $\notnot\ R$ as different relations. To see this, consider again our example formula $F'=(\andl, (\qtreel, Q_R), 
(\notl, Q_R))$. We clearly have that the answer of this query is always empty, and therefore $|Q_{F'}|=0$. However, here $|Q^+_{F'}|=\Theta(|R|)$ for every $R$, and thus our algorithm is
worst-case optimal only if we consider the possible output size of a
more general formula, $F'' = (\andl, (\qtreel, Q_R), (\notl, Q_R'))$.
This impacts in other operations of the relational algebra. We can write all of them as \lqdags, but for some of them we will not ensure their optimal evaluation. For instance, the expression 
$Q_R\ \andand\ (\notnot\ (Q_R\ \andand\ Q_S))$, which expresses the antijoin between $R$ and $S$, is not optimal because both $Q_R$ and $\notnot\ Q_R$ appear in the formula. 
A way to ensure that our result applies is to require that the atomic
expressions (once the \notl operations are pushed down) refer all to different 
relations.

\begin{theorem}
	Let $F$ be a relational formula represented by an \lqdag $L_F$.
	If the number of different relations involved in $F$ equals the number of different atomic expressions, then Algorithm~\ref{alg:supercompletion} evaluates $F$ in worst-case optimal time in data complexity.
\end{theorem}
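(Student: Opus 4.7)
The plan is to piece together three ingredients that have already been developed in the paper: the cost analysis of Algorithm~\ref{alg:supercompletion}, Lemma~\ref{lem:wco-general}, and the hypothesis distinguishing atomic expressions from relations.

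First I would recall the cost bound stated in the paragraph immediately preceding the theorem: Algorithm~\ref{alg:supercompletion} constructs the super-completion $Q_F^+$ by performing $O(2^d|F|)$ constant-time operations per generated node, and because $|Q_F^+|$ is at most $M \log\ell$ (with $M$ the maximum number of nodes at any level of $Q_F^+$), the total time is $O(M \cdot 2^d |F| \log\ell)$. Compacting $Q_F^+$ into $Q_F$ adds only an $O(|Q_F^+|)$ bottom-up pass and so fits within the same bound.

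Next I would apply Lemma~\ref{lem:wco-general} to obtain relations $R_1'(\A_1), \ldots, R_n'(\A_n)$ of sizes $O(2^d|Q_i|)$ whose evaluation under $F$ yields an output of size $\Omega(M/2^d)$. This is where the hypothesis of the theorem enters. After pushing all \notl operators down to the leaves of the syntax tree, using the \andl/\orl duality already sketched in the paper (which does not alter the number of atomic expressions), each leaf is of the form $(\qtreel, Q_R)$ or $(\notl, Q_R)$ for some relation $R$. The hypothesis that the number of distinct relations coincides with the number of distinct atomic expressions then guarantees that these atomic expressions refer to pairwise different relations. Consequently, the relations $R_1', \ldots, R_n'$ supplied by the lemma can be assigned simultaneously as the instances of the corresponding relation symbols of $F$, yielding a genuine database $D'$ over the schema of $F$ whose cardinalities are, up to the constant factor $2^d$, the same as those of $D$. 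Hence $M/2^d$ is a lower bound for the worst-case output size $F(D)^*$ of $F$ over such instances.

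Combining these two bounds gives a running time of $O(F(D)^* \cdot 2^{2d} |F| \log\ell) = \tilde{O}(F(D)^*)$, which is worst-case optimal in data complexity, where $d$ and $|F|$ are treated as constants depending only on the query. The main obstacle will be the bookkeeping around the normalization step: one has to verify that pushing \notl downward preserves the one-to-one correspondence between atomic expressions and underlying relations, and that the lower bound produced by Lemma~\ref{lem:wco-general} applies over the original schema rather than a relaxed one in which $R$ and $\overline{R}$ would count as independent relations. Once this is argued, the theorem follows by direct substitution of the lemma's bound into the cost estimate of Algorithm~\ref{alg:supercompletion}.
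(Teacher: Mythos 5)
Your proposal follows the paper's own argument essentially verbatim: it combines the $O(M \cdot 2^d |F| \log \ell)$ cost bound for Algorithm~\ref{alg:supercompletion} with Lemma~\ref{lem:wco-general}, and invokes the hypothesis on atomic expressions exactly as the paper does, namely to guarantee that the lemma's witness relations $R_1',\ldots,R_n'$ assemble into a legitimate database over the original schema rather than one where $R$ and its complement would be treated as independent relations. The subtleties you flag (pushing \notl down to the leaves before counting atomic expressions, and the $(\andl,(\qtreel,Q_R),(\notl,Q_R))$ counterexample) are precisely the ones the paper itself discusses, so nothing essential is missing.
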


Note that this result generalizes Theorem~\ref{thm:main}. 
Moreover, it does not
matter how we write our formula $F$ to achieve worst-case optimal evaluation.
For example, our algorithms behave identically on $((R \bowtie S) \bowtie T)$
and on $(R \bowtie (S \bowtie T))$.

\section{Final Remarks}\label{sec-discuss}
The evaluation of join queries using \qdags provides a competitive alternative to
current worst-case optimal
algorithms~\cite{geometric,panda,minesweeper,NPRR12,leapfrog}. When compared to
them, we find the following time-space tradeoffs.

Regarding space, \qdags require only just a few extra words per tuple, which is
generally much less than what standard database indexes require, and definitely
less than those required by current worst-case optimal algorithms (e.g.
\cite{geometric,NPRR12,leapfrog}).
Moreover, in both NPRR~\cite{NPRR12} and leapfrog~\cite{leapfrog}, the required
index structure only works for a specific ordering of the attributes. Thus, in
order to efficiently evaluate any possible query using these two
algorithms, a separate index is required for every possible attribute order (i.e.,
$d!$ indexes). In contrast, all we need to store is one \qtree per relation, and
that works for any query.
Even if we resort to the (simpler) $k^d$-tree representation by Brisaboa et
al.~\cite{BLNis13}, the extra space increases by a factor of $2^d$ bits, which
is still considerably less than the alternative of $d!$ standard indexes for any
order of variables (e.g., for $d=10$, $2^d = 1024$, while $d! =
3628800$, i.e., $\approx 3500$ times bigger).

Regarding time, the first comparison that stands aside is the $\log(N)$ factor,
present in our solution but not in others like NPRR~\cite{NPRR12} and
leapfrog~\cite{leapfrog}. Note, however, that NPRR assumes to be able to compute
a join of two relations $R$ and $S$ in time $O(|R| + |S| + |R \bowtie S|)$,
which is only possible when using a hash table and when time is computed in an
amortized way or in expectation \cite[footnote 3]{NPRR12}. This was
also noted for leapfrog~\cite[Section 5]{leapfrog}, where they state that
their own $\log(N)$ factor can be avoided by using hashes instead of tries, but
they leave open whether this is actually better in practice.
More involved algorithms such as PANDA~\cite{panda} build upon algorithms to
compute joins of two relations, and therefore the same $\log(N)$ factor appears
if one avoids hashes or amortized running time bounds.
Our algorithm incurs in an additional $2^d$ factor in time when compared to NPRR
or leapfrog, similarly to other worst-case optimal solutions based on
geometric data structures~\cite{geometric,minesweeper}. This is, as far as we
are aware, unavoidable: it is the price to pay for using so little space. Note,
however, that this factor does not depend on the data, and that it can be
compensated by the fact that our native indexes are compressed, and thus might fit
entirely in faster memory.

One important benefit of our framework is that answers to queries can be
delivered in their compact representation. As such, we can iterate over them, or
store them, or use them as materialized views, either built eagerly, as \qtrees,
or in lazy form, as \lqdags. One could even cache the top half of the
(uncompacted) tree containing the answer, and leave the bottom half in the form
of \lqdags. The upper half, which is used the most, is cached, and  the bottom
half is computed on demand. Our framework also permits sharing \lqdags as common
subexpressions that are computed only once.
Additionally, we envision two main uses for the techniques presented in this
paper. On one hand, one could take advantage of the low storage cost of these
indexes, and add them as a companion to a more traditional database setting.
Smaller joins and selections could be handled by the database, while multijoins
could be processed faster because they would be computed over the \qtrees. On the other hand, we could use \lqdags\ instead, so as to evaluate more expressive
queries over \qtrees. Even if some operations are not optimal, what is lost in
optimality may be gained again because these data structures allow operating in
faster memory levels. 

There are several directions for future work. For instance, we are trying to
improve our structures to achieve good bounds for acyclic queries (see Appendix
\ref{sec-more-stuff}), and we see an opportunity to apply quadtrees in the
setting of parallel computation (see, e.g., Suciu \cite{parallel}).
We also comment in  Appendix~\ref{sec-more-stuff} on bounds for clustered databases,
another topic deserving further study. 

%
\bibliographystyle{plainurl}
\bibliography{joins_trees}

\appendix
\section{Appendix}
\label{sec-more-stuff}

\subsection{Additional comments on Projections}
Including projection in our framework is not difficult: in a quadtree $Q$ storing a relation $R$ with attributes $\A$, one can compute the projection $\pi_{\A'}(R)$, for $\A' \subseteq \A$ as follows. 
Assume that $|\A| = d$ and $|\A'| = d'$. Then the projection is the quadtree defined inductively as follows. If $\qtvalue(Q)$ is $0$ or $1$ then the projection is a leaf with the same value. 
Otherwise $Q$ has $2^{d}$ children. The quadtree for  $\pi_{\A'}(R)$ has instead $2^{d'}$ children, where the $i$-th child is defined as the \oror of 
all children $j$ of $Q$ such that the projection of the $d$-bit representation of $j$ to the positions in which attributes in $\A'$ appear in $\A$ is precisely the 
$d'$-bit representation of $i$. For example, computing $\pi_{A_1,A_2}R(A_1,A_2,A_3)$ means creating a tree with four children, resulting of the \oror of children $0$ and $1$, $2$ and $3$, $4$ and $5$ and $6$ and $7$, respectively. 

Having defined the projection, a natural question is whether one can use it to obtain finer bounds for acyclic queries or for queries with bounded treewidth. For example, 
even though the AGM bound for $R(A,B) \bowtie S(B,C)$ is quadratic, one can use Yannakakis' algorithm \cite{yannakakis} to compute it in time
$O(|R|+|S| + |R \bowtie S|)$. This is commonly achieved by first computing $\pi_B(R)$ and $\pi_B(S)$, joining them, 
and then using this join to filter out $R$ and $S$. 
Unfortunately, adopting this strategy in our lqdag framework would still give us a quadratic algorithm, even for queries with small output, because after the projection we would need to extend the result again. The same holds for the general Yannakakis' algorithm when computing the final join after performing all necessary semijoins.

More generally, this also rules out the possibility to achieve optimal bounds for queries with bounded treewidth or similar measures. Of course, this is not much of a limitation because 
one can always compute the most complex queries with our compact representation and then carry out Yannakakis' algorithms on top of these results with standard database techniques, but it would be 
better to resolve all within our framework. We are currently looking at improving our data structures in this regard. 

\subsection{Geometric representation and finer analysis}
As \qtrees have a direct geometric interpretation, it is natural to compare them to the algorithm based on \emph{gap boxes} proposed by Khamis et al.~\cite{geometric}. In a nutshell, this algorithm uses a data structure that stores relations as a set of multidimensional cubes that contain no data points, which the authors call gap boxes. Under this framework, a data point is in the answer of the join query $R_1 \bowtie \cdots \bowtie R_n$ if the point is not part of a gap box in any of the relations $R_i$. The authors then compute the answers of these queries using an algorithm that finds and merges appropriate gap boxes covering all cells not in the answer of the query, until no more gap boxes can be found and we are left with a covering that misses exactly those points in the answer of the query. Perhaps more interestingly, the algorithm is subject of a finer analysis: the runtime of queries can be shown to be bounded by a function of the size of a \emph{certificate} of the instance (and not its size). The certificate in their case is simply the minimum amount of gap boxes from the input relations that is needed to cover all the gaps in the answer of the query. Finding such a minimal cover is NP-hard, but a slightly restricted notion of gap boxes maintains the bounds within an $O(\log^d \ell)$ approximation factor. 

While any index structure can be thought of as providing a set of gap boxes \cite{geometric}, \qtrees provide a particularly natural and compact representation. Each node valued $0$ in a \qtree signals that there are no points in its subgrid, and can therefore be understood as a $d$-dimensional gap box. We can understand \qdags as a set of gap boxes as well: precisely those in its completion. Now let $J = R_1 \bowtie \cdots \bowtie R_n$ be a join query over $d$ attributes, and let $R_1^*,\dots,R_n^*$ denote the extension of each $R_i$ with the attributes 
of $J$ that are not in $R_i$. As in Khamis et al.~\cite{geometric}, a \emph{\qtree certificate} for $J$ is a set of gap boxes (i.e., empty $d$-dimensional grids obtained from any 
of the $R_i^*$s) such that every coordinate not in the answer of $J$ is covered by at least one of these boxes. Let $C_{J,D}$ denote a certificate for $J$ of minimum size. 

\begin{proposition}
Given query $J$ and database $D$, Algorithm~\ref{alg:join} runs in time $O((|C_{J,D}| + |J(D)|) \cdot 2^d n \log \ell)$, where $J(D)$ is the output of the query $J$ over $D$. 
\end{proposition}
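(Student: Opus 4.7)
I plan to reduce the running-time bound to controlling $M$, the maximum number of nodes per level of the non-pruned \qtree $Q^+$ produced by the \andand step, and then to bound $M$ by $O(|C_{J,D}|+|J(D)|)$ via a charging argument. By Lemma~\ref{lem:and} the \andand step costs $O(M\cdot 2^d n \log \ell)$, and by Lemma~\ref{lem:extend} the \extend preamble contributes only $O(2^d n)$, so the whole algorithm runs in $O(M\cdot 2^d n \log \ell)$ time; establishing $M = O(|C_{J,D}|+|J(D)|)$ then gives the bound of the proposition.

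At each level $j$ of $Q^+$ I partition the nodes into three groups: (i)~\emph{1-leaves}, which appear only at the deepest level and correspond to tuples of $J(D)$; (ii)~\emph{0-leaves}, the quadtree-aligned subgrids empty in some $R_i^*$ and whose parent in $Q^+$ is still internal; and (iii)~\emph{internal nodes}, whose subgrids are non-empty in every $R_i^*$. The 1-leaves contribute at most $|J(D)|$. For (iii), each internal node at level $j$ has a leaf descendant, and distinct level-$j$ internal nodes occupy disjoint subgrids and hence disjoint subtrees, so I can injectively charge each such internal node to a descendant leaf; the number of level-$j$ internal nodes is therefore bounded by the total number of leaves of $Q^+$.

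The key step is to bound the total number of 0-leaves by $|C_{J,D}|$. The crucial observation is that if $L$ is a 0-leaf with parent $L^{\uparrow}$ in $Q^+$ and $B\in C_{J,D}$ satisfies $B\supsetneq L$, then, since $B$ and $L^{\uparrow}$ are both quadtree-aligned and both strictly contain $L$, they are nested and $L^{\uparrow}\subseteq B$; but $B$ is empty in some $R_i^*$, forcing $L^{\uparrow}$ to be empty in $R_i^*$ as well, which contradicts that $L^{\uparrow}$ is internal in $Q^+$. Thus every $B\in C_{J,D}$ that meets $L$ satisfies $B=L$ or $B\subsetneq L$. Since $L$ is entirely non-answer and $C_{J,D}$ covers all non-answer cells, at least one such $B$ exists, giving a dichotomy: either (Case~1) some $B\in C_{J,D}$ equals $L$ and I map $L\mapsto B$, or (Case~2) every such $B$ lies strictly inside $L$ and I map $L$ to any one such $B$. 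Both maps are injective; moreover their images are disjoint, because if $B$ were the image of both a Case-1 0-leaf $L_1$ (so $B=L_1$) and a Case-2 0-leaf $L_2$ (so $B\subsetneq L_2$), then $L_1\subsetneq L_2$, contradicting that distinct 0-leaves are disjoint subgrids. This yields a global injection from 0-leaves into $C_{J,D}$ and hence at most $|C_{J,D}|$ 0-leaves, whence $M=O(|C_{J,D}|+|J(D)|)$. The main obstacle is the alignment argument ruling out $B\supsetneq L$; it is short but relies on the structural fact that a 0-leaf's parent in $Q^+$ is non-empty in every $R_i^*$, which is specific to the stopping rule of \andand.
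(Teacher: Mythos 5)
The paper states this proposition without any proof, so there is nothing of its own to compare your argument against; what you have written is, as far as I can check, a correct and complete proof. The reduction to bounding $M$ via Lemma~\ref{lem:and} and Lemma~\ref{lem:extend} mirrors the analysis of Theorem~\ref{thm:main}, and the genuinely new content is the charging argument: internal nodes at any fixed level of $Q^+$ are charged injectively to leaf descendants, the $1$-leaves are exactly the tuples of $J(D)$, and each $0$-leaf $L$ is charged to a certificate box that either equals $L$ or is strictly contained in it, the case $B\supsetneq L$ being excluded because two nested aligned boxes strictly containing $L$ would force $L^{\uparrow}\subseteq B$, making $L^{\uparrow}$ empty in some $R_i^*$ and hence a leaf rather than an internal node of $Q^+$. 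Your injectivity and image-disjointness checks are right, since distinct leaves of $Q^+$ occupy disjoint subgrids. The one point worth flagging is that the nesting step needs every box in $C_{J,D}$ to be quadtree-aligned; this is consistent with the paper's definition, which takes the available gap boxes to be the $0$-valued nodes of the completions of $R_1^*,\dots,R_n^*$, but the argument would not survive a relaxation to arbitrary axis-aligned empty boxes. Subject to that reading, the proof is sound, and it in fact establishes the slightly stronger fact that the bound holds with any certificate in place of the minimum one.
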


Now, one can easily construct instances and queries such that the minimal certificate $C_{J,D}$ is comparable to $2^{\rho^*(J,D)}$. So  this will not give us 
optimality results, as discovered  \cite{geometric,minesweeper} for acyclic queries or queries with bounded treewidth. This is a consequence of increasing the dimensionality of the relations.  
Nevertheless, the bound does yield a good running time when we know that $C_{J,D}$ is small. It is also worth mentioning that our algorithms directly computes the only possible representation of the output as gap boxes (because its boxes come directly from the representation of the relations). This means that there is a direct connection between instances that give small certificates and instances for which the representation of the output is small. 

\subsection{Better runtime on clustered databases} 

\Qtrees have been shown to work well in applications such as RDF stores or web graphs, where 
data points are distributed in clusters \cite{BLNis13,AGBFMPNkais14}. It turns out that combining the analysis described in Section~\ref{sec-qtrees} for clustered grids with the technique we used to show 
 that joins are worst-case optimal, results in a better bound for the running time of our algorithms, and a small refinement of the AGM bound itself. 

 Consider again the triangle query $R(A,B) \bowtie S(B,C) \bowtie T(A,C)$, and assume the points in each relation are distributed in $c$ clusters, each of them 
 of size at most $s \times s$, and with $p$ points in total. Then, at depth $\log (\ell/s)$, the \qtrees of $T$, $R$, and $S$ have at most $2^d$ internal nodes per cluster (where we are in dimension $d=3$): at this level 
 one can think of the trimmed quadtree as representing a coarser grid of cells of size $s^d$, and therefore each cluster can intersect at most two of these coarser 
 cells per dimension. Thus, letting $Q_R'$, $Q_S'$, and $Q_T'$ be the \qtrees for $R$, $S$ and $T$ trimmed up to level $\log (\ell/s)$ 
 (and where internal nodes take value $1$), then the proof of Theorem~\ref{thm:main} yields a bound for the number of internal nodes at level $\log (\ell/s)$ of the 
 \qtree $Q^+$ of the output before the compaction step (or, equivalently, of the super-completion of the \lqdag of the triangle query): this number must be bounded 
 by the AGM bound of the instances given by $Q_R'$, $Q_S'$ and $Q_T'$, which is at most $(c \cdot 2^d)^{3/2}$. Going back to the data for the quadtree $Q^+$, the 
 bound on the number of internal nodes means that the points of the output are distributed in at most $(c \cdot 2^d)^{3/2}$ clusters of size at most $s^d$. 
 In turn, the maximal number of $1$s in the answer is bounded by the AGM bound itself, which here is $p^{3/2}$. This means that the size of $Q^+$ is bounded by  
 $O((c \cdot 2^d)^{3/2} \log \ell + p^{3/2} \log s)$, and therefore the running time of the algorithm is $O\big(((c \cdot 2^d)^{3/2} \log  \ell + p^{3/2} \log s) \cdot 2^d\big)$. 
 This is an important reduction in running time if the number $c$ of clusters and their width $s$ are small, as we now multiply the number of answers by $\log s$ instead of $\log \ell$. 

 To generalize, let us use $D^{c,d}$ as the database ``trimmed'' to $c \cdot 2^d$ points. The discussion above can be extended to prove the following.  

 \begin{proposition} \label{prop:clusters}
 Let $J = R_1 \bowtie \cdots \bowtie R_n$ be a full join query, and $D$ a database over schema $\{R_1,\ldots,R_n\}$, with $d$ attributes in total, where the 
 domains of the relations are in $[0,\ell-1]$, and where the points in each relation are distributed in $c$ clusters of width $s$. 
 Then Algorithm~\ref{alg:join} works in time $O\big(  (2^{\rho^*(J,D^{c,d})} \log \ell + 2^{\rho^*(J,D)} \log s) \cdot 2^d n )$.
 \end{proposition}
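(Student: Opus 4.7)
The plan is to bound the number of internal nodes of the non-pruned \qtree $Q^{+}$ produced by \andand{} (whose size, times the per-node cost $O(2^{d}n)$ implicit in Lemma~\ref{lem:and}, dominates the running time) by splitting its levels at depth $j^{*}=\log(\ell/s)$ and analyzing the two halves with different bounds, following exactly the sketch given for the triangle query just before the statement. The guiding intuition is that above level $j^{*}$ the clustering hypothesis limits the coarse footprint of each input relation to $c\cdot 2^{d}$ cells, so the \AGM{} bound of the coarse instance governs the per-level counts; below level $j^{*}$ every value-$1$ leaf is a real answer, so the original \AGM{} bound governs.

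For any level $0\le j\le j^{*}$, I will trim each input \qtree $Q_{i}$ at level $j$, turning its level-$j$ internal nodes into value-$1$ leaves, and let $R_{i}'$ be the relation represented by the trimmed tree $Q_{i}'$. Because each of the $c$ clusters of $R_{i}$ has width $s\le \ell/2^{j}$, its bounding box straddles at most two cells of the coarse $(\ell/2^{j})^{d}$ grid per dimension, so the cluster occupies at most $2^{d}$ coarse cells and $|R_{i}'|\le c\cdot 2^{d}$; thus $(R_{1}',\ldots,R_{n}')$ is (an instance of) $D^{c,d}$. The trimming argument in the proof of Theorem~\ref{thm:main} then identifies the internal nodes at level $j$ of $Q^{+}$ with value-$1$ leaves of the join result on $(R_{1}',\ldots,R_{n}')$, bounding their number by $2^{\rho^{*}(J,D^{c,d})}$. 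Summing over the $j^{*}+1=O(\log(\ell/s))$ upper levels gives $O(2^{\rho^{*}(J,D^{c,d})}\log\ell)$ internal nodes.

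For levels $j^{*}<j\le \log\ell$, I will reapply the trimming argument of Theorem~\ref{thm:main} at each such level $j$; the relevant trimmed relations now satisfy only $|R_{i}'|\le |R_{i}|$, and by monotonicity of the \AGM{} bound the number of internal nodes at level $j$ is at most $2^{\rho^{*}(J,D)}$. Summing over the $\log s$ lower levels contributes $O(2^{\rho^{*}(J,D)}\log s)$ internal nodes. Adding the two halves and multiplying by the $O(2^{d}n)$ work done at each internal node of $Q^{+}$ (each generates $2^{d}$ children, with $O(n)$ work per child) yields the claimed bound $O\bigl((2^{\rho^{*}(J,D^{c,d})}\log\ell + 2^{\rho^{*}(J,D)}\log s)\cdot 2^{d}n\bigr)$. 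The main obstacle is verifying the coarse-footprint bound: a cluster of width exactly $s$ aligned arbitrarily in a grid with cells of side $\ell/2^{j}\ge s$ hits at most two consecutive cells per dimension, hence at most $2^{d}$ cells overall, and this is precisely what lets us absorb the clustering into the $D^{c,d}$ factor for the upper levels rather than paying the full $D$-\AGM{} everywhere.
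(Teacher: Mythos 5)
Your proposal is correct and follows essentially the same route as the paper, which proves this proposition only by the triangle-query sketch immediately preceding it: trim at level $\log(\ell/s)$, use the $\le 2^d$-coarse-cells-per-cluster footprint bound to charge the upper levels to $2^{\rho^*(J,D^{c,d})}$, and charge the lower $\log s$ levels to $2^{\rho^*(J,D)}$ via the trimming argument of Theorem~\ref{thm:main}. Your per-level summation is just a slightly more explicit packaging of the paper's appeal to the clustered-quadtree size formula, and the final multiplication by the $O(2^d n)$ per-node cost matches Lemma~\ref{lem:and}.
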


\end{document}